\setlist[itemize]{label=$\cdot$}
\definecolor{Gray}{gray}{0.9}
\definecolor{LightCyan}{rgb}{0.88,1,1}
\definecolor{junebud}{rgb}{0.74, 0.85, 0.34}
\newtheorem{theorem}{Theorem}[section]
\newtheorem{lemma}[theorem]{Lemma}
\newcommand{\kibitz}[2]{\ifnum\Comments=1{\color{#1}{#2}}\fi}
\renewcommand*{\@fnsymbol}[1]{\ensuremath{\ifcase#1\or\dagger\else\@ctrerr\fi}}
\DeclareMathAlphabet{\mathcal}{OMS}{cmsy}{m}{n}
\begin{document}

\date{}

\title{\Large \bf Selfish Behavior in the Tezos Proof-of-Stake Protocol}

\author{
{\rm Michael Neuder}{\color{black}\thanks{Work completed while MN was visiting Harvard University.}}\\ 
{\small University of Colorado, Department of Computer Science \quad \quad} \\ 
{\small michael.neuder@colorado.edu} 
\and
{\rm Daniel J. Moroz}\\
{\small Harvard University, School of Engineering and Applied Sciences}\\
{\small dmoroz@g.harvard.edu}
\and
{\rm Rithvik Rao}\\
{\small Harvard University, School of Engineering and Applied Sciences}\\
{\small rithvikrao@college.harvard.edu}
\and
{\rm David C. Parkes}\\
{\small Harvard University, School of Engineering and Applied Sciences}\\
{\small parkes@eecs.harvard.edu}
} 

\maketitle

\begin{abstract}
Proof-of-Stake consensus protocols give rise to complex modeling challenges. We analyze the Babylon update (October 2019) to the Proof-of-Stake protocol on the Tezos blockchain, and demonstrate that, under certain conditions, rational participants are incentivized to behave dishonestly. In doing so, we provide a theoretical analysis of the feasibility and profitability of a block stealing attack that we call \textit{selfish endorsing}, a concrete instance of an attack previously only theoretically considered. We propose and analyze a simple change to the Tezos protocol which significantly reduces the (already small) profitability of this dishonest behavior, and introduce a new delay and reward scheme that is provably secure against length-1 and length-2 selfish endorsing attacks. Our framework provides a template for analyzing other Proof-of-Stake protocols for the possibility of selfish behavior.
\end{abstract}

\section{Introduction}
Blockchain technologies have received significant attention since the release of the Bitcoin protocol in 2008 \cite{nakamoto2008bitcoin}.
Blockchains arose as a solution to a critical problem in enabling permissionless, decentralized cryptocurrencies: how to maintain a global consensus on user account balances. The Bitcoin network addressed this problem by putting every transaction into a globally visible ledger protected by a Proof-of-Work (PoW) consensus protocol. In PoW, each block creator is tasked with assembling a list of valid transactions and doing costly computational work, and is rewarded with some amount of the native asset, Bitcoin. 

Proof-of-Stake (PoS) protocols are seen as potential successors of PoW protocols. Both kinds of protocols run a lottery to select the creator of the next block (the lottery is implicit in PoW, explicit in PoS), and in order to prevent malicious participants from creating many identities to increase their chances of winning, entry into these lotteries must be costly. PoW requires that lottery entrants burn computational cycles in order to join, while PoS requires participants to forego the use of staking capital for some duration of time. 
In particular, PoS protocols require that staked capital be forfeited if the  behavior  of a miner is not consistent  with that specified by the rules of the protocol.  While both approaches reward participants proportionally to expenditure, PoW has received criticism, and PoS is thought to provide an important step forward for consensus protocols on blockchains. 

The three main critiques of PoW protocols are the environmental impact, the inflationary tendencies, and the centralization found in many digital currencies.
The environmental argument is a result of PoW using significant amounts of energy; by some estimates, the annual energy consumption of the Bitcoin network is equivalent to that of Austria \cite{wastebitcoin}. The related inflationary criticism stems from the fact that it requires substantial real world expenditures by miners on hardware and electricity. These miners are in turn compensated by large block rewards that lead to  inflationary pressure on the currency, and if the rewards are reduced, fewer miners participate and the security of the network degrades. 
PoS protocols do not suffer from these limitations. 
Another  critique of PoW arises from concerns around the concentration of mining power. As of November 2019, F2Pool owned 18\% of the hashpower in the Bitcoin network, and the top four mining pools combined controlled more than 50\% of the hashpower \cite{pools}. However, because the ownership of cryptocurrencies is far from decentralized \cite{quantdecentralization},  PoS may not address this issue either.

  Tezos \cite{goodman2014tezos}, EOS \cite{eos}, Cardano (ADA) \cite{kiayias2017ouroboros}, BlackCoin \cite{vasin2014blackcoin}, and Nxt \cite{nxt} represent the major PoS networks that are currently running.\footnote{While many PoS protocols have been proposed, few are in actual use as of early 2020. PoS protocols have proven difficult to implement and pose novel technical challenges. As an example, Ethereum's PoS proposal, Casper \cite{buterin2017casper}, is still in development as of January 2020.} 
One of the distinguishing features of Tezos is that it has a built-in governance protocol, which allows changes to the specification to be voted on by participants in the P2P network.
In contrast, the development of Bitcoin has been slow because few developers want to risk forking the network over a protocol change \cite{lee20}. The design of the Tezos network hopes to encourage agreement on upgrades by creating a specific venue and timeline for voting on software updates. 
As of January 2020, each change required a quorum of participants and over 80\% approval to be instantiated \cite{postezos}.
On October 17, 2019, an update called {\em Babylon}~\cite{babylonupdate, babylondoc} was accepted into the Tezos protocol.

In the paper, we analyze a large component of this upgrade: a new consensus protocol called Emmy$^+$ \cite{emmyplusannounce}. In particular, we identify an incentive vulnerability in the Tezos PoS mechanism, analyze the severity of an attack, and establish security properties of alternative mechanisms. In doing so, we provide a simple theoretical framework that can  be used to analyze other implementations, and that  we hope will encourage a more formal treatment of the security properties of PoS protocols. 

Section~\ref{sec:2} describes how PoS is implemented in Tezos, Section~\ref{sec:3} formalizes the attack that we call \textit{selfish endorsing} and assesses the probability of it being profitable for a rational agent, Section~\ref{sec:4} proposes a simple heuristic fix to the Emmy$^+$ protocol, and Section~\ref{sec:5} proposes an alternative mechanism and proves that it is secure against a subset of selfish endorsing attacks, before concluding in Section~\ref{sec:6}.

\subsection{Related Work}

We seek to understand the extent to which rational participants in a particular PoS system can benefit by not behaving according to the protocol. This is analogous to the question asked by Eyal and Sirer (2013) \cite{eyal2018majority}, who demonstrate that miners can earn a higher proportion of rewards in a PoW protocol by deviating from the honest protocol and following a {\em selfish mining} strategy. Follow-up work includes Sapirshtein et al. (2016) \cite{sapirshtein2016optimal}, who identify the optimal such policy, Nayak et al. (2016) \cite{nayak2016stubborn}, who  consider network attacks, and Kwon et al. (2017) \cite{kwon2017selfish}, who consider the impact of selfish mining in the context of mining pools.

In regard to PoS, Brown-Cohen et al. (2019) show that complete security in their model of longest-chain PoS protocols is not possible \cite{brown2019formal}. The dishonest behavior that we refer to as \textit{selfish endorsing} is a real-world instance of the theoretical {\em predictable selfish mine} attack that appears in their work. 
We are not aware of any other academic work that formally analyzes the incentives of the Tezos PoS protocol. Nomadic Labs, the team that implemented Emmy$^+$ ~\cite{emmyplusannounce}, did provide a blog post with the results of an incentive analysis \cite{analysisemmyplus}, but without providing an explicit formalism for the model used or the probabilistic analysis.\footnote{We have verified with the authors of the blog post that the two models achieve similar numerical results when calculating the probability of a profitable attack using the same parameters.} In this work, we present the complete derivation of an attack model, and make explicit the methods used to obtain our results.

\section{Proof-of-Stake in Tezos}
\label{sec:2}

\subsection{The Basics}

Tezos implements an {\em optional Delegated Proof-of-Stake (DPoS) protocol} \cite{goodman2014tezos, postezos}, which is sometimes referred to as {\em Liquid Proof-of-Stake} \cite{lpos} to distinguish it from other,  more rigid DPoS implementations~\cite{eos}.\footnote{The system is described as ``optionally delegated" because owners of the currency can choose to delegate their baking and endorsing rights to another baker without forfeiting ownership of the token, but they are not compelled to as in a typical DPoS system such as EOS \cite{eos}. The term ``Liquid PoS" is used to emphasize this difference, and stems from the ideal of a liquid delegation market~\cite{lpos}.} 

Members of the Tezos consensus layer are called \textit{delegates} and are considered {\em active} when they participate in the creation and validation of blocks (and {\em passive} otherwise).
The Tezos unit of account (XTZ) is split into groups of 8,000 tokens called \textit{rolls}, and each delegate is associated with a set of rolls. Active delegates participate in a lottery to \textit{bake} and \textit{endorse} a block at every block-height in the chain. Bakers are responsible for including transactions in blocks while endorsers cryptographically sign the ``best''  block that they have seen at each height. 

The baking-and-endorsing priority lottery is carried out by randomly selecting rolls and giving the next available priority to the owner of that roll, a technique known as \textit{follow-the-Satoshi} \cite{bentov2014proof}. For each block-height, a list of bakers is created using the random roll selection process, and the index of a baker in this list determines the priority  with which they can create a block at this height. Additionally a set of 32 endorsers is created for each block-height. There is no priority list for endorsers, and  each has equal weight. Each draw from the set of rolls is done with replacement, so the same delegate may appear many times on the baking priority list as well as  in the set of endorsers.  

Bakers and endorsers are rewarded based on participation, which creates an incentive for delegates to remain active. Figure \ref{fig:endorsmentrewards} shows how the blocks are created and endorsed as well as the value of each, which is a function of the reward scheme described below. 

\subsection{The Babylon Upgrade and Emmy$^+$}

The new consensus protocol, Emmy$^+$, which was part of the October 2019 Babylon update, is distinct from its predecessor, Emmy, in three important ways:
\begin{enumerate} 
\item  {\bf A block's validity-time is a function of the number of endorsements it includes, in addition to the priority of the baker}. Each block includes a timestamp for when it was created. In order to regulate the rate at which blocks are created, Emmy$^+$ uses a minimum delay between blocks (see Equation \ref{delay}), which is what we refer to as the {\em validity-time}. When a node in the P2P network checks an incoming block, it ensures that the difference in the timestamp of that block and the previous block in the chain is greater than the validity-time. 

Before Emmy$^+$, the validity-time was only a function of the priority of the baker, but now it is also a function of the number of endorsements it includes. Importantly, the number of endorsements is not the number of delegates who endorse the block itself, but rather the number of endorsements for the previous block that this block includes. Since endorsements are  operations that are heard over the network,  including an endorsement in a block is analogous to including a transaction. 

In order for a block to be considered {\em valid}, its timestamp must differ from the previous block's timestamp by at least $\mathcal{D}$ seconds, where $\mathcal{D}$ defines the validity-time and is the following function of the baker's priority, $p$, and the number of endorsements included in the block, $e$ (see \textit{Minimal block delays} in \cite{postezos}):
\begin{align}
    \mathcal{D}(p, e) = 60 + 40  p + 8\max(24-e, \:0). \label{delay}
\end{align}

The effect is that for each priority-level that a baker is below the highest-priority (0), the validity-time for the block increases by 40 seconds. Also, for each endorsement that is missed below 24 of the 32 endorsers, the validity-time increases by 8 seconds; a block may miss up to 8 endorsements without incurring a time penalty, but each additional missed signature slows validity by 8 seconds.\footnote{Historical data for the Tezos network shows that the typical block misses only a few endorsements, and that these penalties have been quite rare~\cite{tzstats}.} If each block is baked by the $0^{th}$ priority delegate and endorsed by most of the endorser set, then a block will be created every 60 seconds and the Tezos network is considered healthy.

\item {\bf The fork-choice rule was changed}. The {\em fork-choice rule} is used by nodes in a P2P network to choose the branch that they will extend. Before this update, the {\em canonical fork} (i.e., the currently active fork) was the one  with the most endorsements (i.e., a heaviest-chain rule~\cite{goodman2014tezos}). After the update, the best fork became the one with the longest chain from the genesis block. This modified rule makes evaluation of branches easier, and alleviates a baker's uncertainty as to when to publish blocks in order to avoid missing out on late endorsements.\footnote{The uncertainty arises in the case where a baker has waited the whole validity-time, but hasn't heard many endorsements for the previous block. Now the baker is uncertain as to whether to publish now and risk a different block including more endorsements and thus becoming the active head, or to wait.}  Brown-Cohen et al.~\cite{brown2019formal} already showed that this longest-chain rule can  lead to theoretical vulnerabilities. Here we focus on one of these, {\em predictable selfish mining}, and our work represents a real-world instance of this vulnerability.

\item {\bf The rewards for baking and endorsing blocks were modified}. Before the change, baking a block earned the delegate a constant reward of 16 XTZ, but in Emmy$^+$ the block rewards are a function of the baker's priority, $p$, and the number of endorsements  $e$ for the previous block that are included in this block (see \textit{Rewards} in \cite{postezos}). Let $\mathcal{R}_{b}$ be the {\em baking reward}. In Emmy$^+$, we have
\begin{align}
    \mathcal{R}_{b}(p, e) &= \frac{16}{p+1} \left(\frac{4}{5} + \frac{1}{5} \cdot  \frac{e}{32}\right).
    \label{rewardblock}
\end{align}

The rewards for endorsements were also modified. Previously, endorsement rewards were a function of the priority of the block that the endorsement signed, but in Emmy$^+$ they are a function of the priority of the block that includes the endorsements. Denote the priority of the baker who baked the block that includes an endorsement as  $p_i$. Then the {\em endorsing reward}, $\mathcal{R}_{e}$, is calculated as 
\begin{align}
    \mathcal{R}_{e}(p_i) &= \frac{2}{p_i +1}.
    \label{rewardendorse}
\end{align}

Figure~\ref{fig:1} illustrates the rewards that the bakers and endorsers earn under the Emmy$^+$ rules.
\begin{figure*}
    \centering
    \includegraphics[scale=0.3]{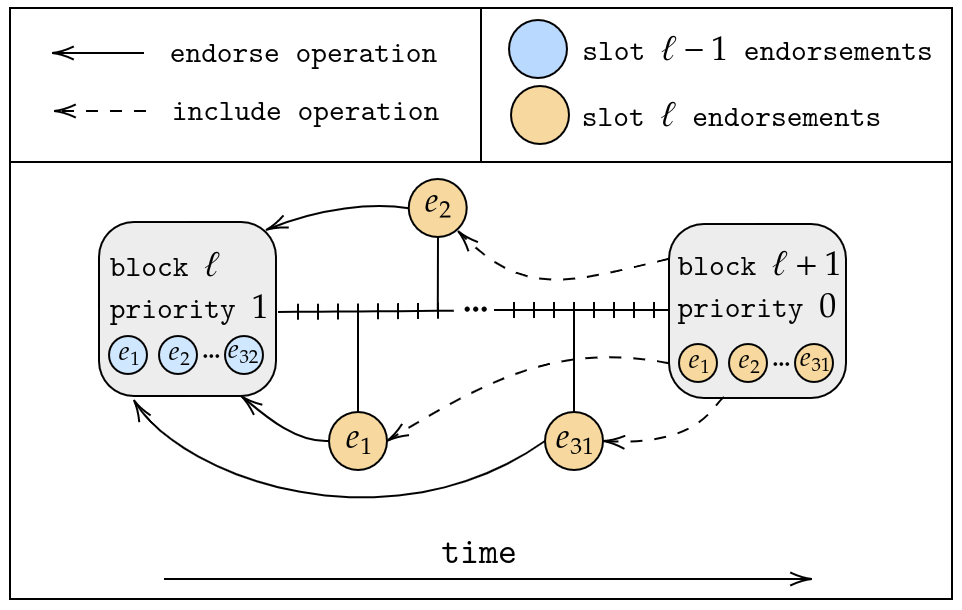}
    \caption{\label{fig:1} Let $\mathcal{R}_{b}(p,e)$ be as defined in \eqref{rewardblock} and let $\mathcal{R}_{e}(p)$ be as defined in \eqref{rewardendorse}. In this scenario, block $\ell$ will earn the baker a reward of $\mathcal{R}_{b}(1, 32) = 8$ XTZ, and block $\ell+1$ will earn the baker $\mathcal{R}_{b}(0, 31)=15.9$ XTZ. The endorsements for slot $\ell-1$ will each earn $\mathcal{R}_{e}(1) = 1$ XTZ and the endorsements for slot $\ell$ will each earn $\mathcal{R}_{e}(0) = 2$ XTZ. Notice that the slot $\ell$ endorsements still get the full reward when signing a lower priority block, simply because they are included in the $0^{th}$ priority block at the $\ell+1$ slot. Before the Emmy$^+$ upgrade the slot $\ell$ endorsements would only earn $1$ XTZ each. } 
    \label{fig:endorsmentrewards}
\end{figure*}
\end{enumerate}

\section{The Selfish Endorsing Attack}
\label{sec:3}

We now give an example of the  vulnerability that we call \emph{selfish endorsing}, which with some probability incentivizes a rational baker  to ignore the longest-chain rule and create a separate two-block fork faster than the rest of the network can publish two blocks. We give an example to illustrate that selfish endorsing is a profitable deviation from the intended protocol based on block and endorsement rewards alone. Additionally this attack provides an opportunity for a 1-confirmation double-spend.\footnote{A double-spend is an attack where the agent executes multiple transactions with the same tokens. The selfish endorsing attack makes this possible because the initial transaction can be included on an honest block, and then the attacker can override this block with a length-2 chain and not include the transaction, which allows the attacker to retain ownership of the tokens.} 

Let $\mathcal{X}$ be a rational delegate who is willing to deviate from the Emmy$^+$ protocol. 
We describe baking and endorsing rights in terms of {\em slots} that correspond to a specific length of the chain (e.g., a block baked at height $n$ from the genesis block occupies the $n^{th}$ slot). In the Emmy$^+$ implementation, delegates see exactly who will have baking and endorsing rights for the next several thousand ($ \approx 5 \times 4096$) blocks. 
For a given slot $\ell$, let $p_\ell$ be the highest priority and $e_\ell$ the number of endorsement rights that $\mathcal{X}$ is randomly allocated. Priorities are zero-indexed, with 0 being the highest.
Additionally, let $n_{\ell+1}$ denote the number of consecutive top baking priorities given to $\mathcal{X}$ at slot $\ell+1$ (e.g., if the baking priority list for the $\ell+1$ slot is $[\mathcal{X}, \mathcal{X}, \texttt{other}, \mathcal{X}, ...]$, then $p_{\ell+1} = 0 \text{ and } n_{\ell+1}=2$).

In our analysis, we make the assumption that messages are sent and received instantaneously. While it is out of scope handle propagation delays in this paper, this question has been studied in the context of Bitcoin~\cite{gobel2016delay} and offers an interesting avenue for future work.

\subsection{An Illustrative Attack on Emmy$^+$}

Figure~\ref{fig:2} shows that an attacker $\mathcal{X}$ with second priority ($p_\ell =1$) at slot $\ell$ can bake a block that will end up on the final chain despite the first-priority ($p_\ell =0$) delegate publishing on time.  This is malicious because the baker with the lower priority stole the block reward from the highest priority baker. 

\begin{figure*}
    \centering
    \includegraphics[scale=0.33]{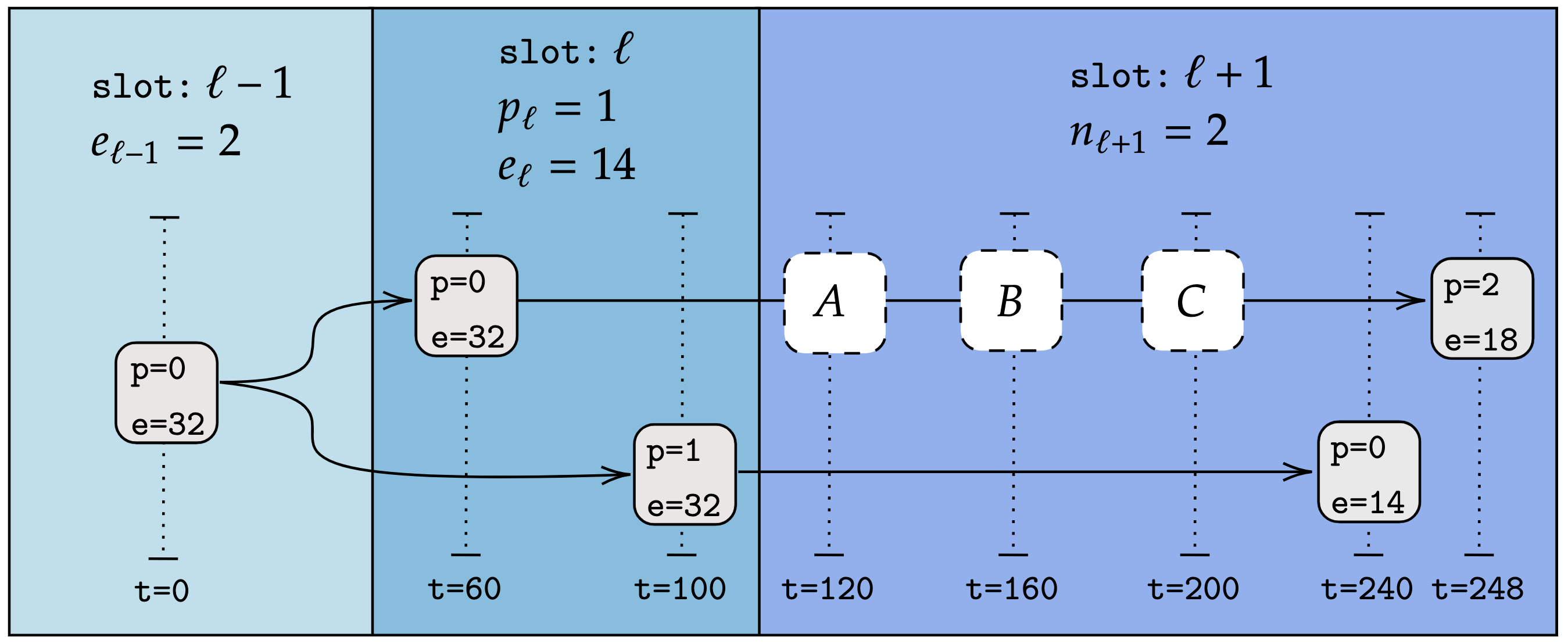}
    \caption{\label{fig:2} Let $p_\ell$ be the highest priority allocated to the attacker at slot $\ell$, and $n_{\ell+1}$ be the number of consecutive top priorities allocated to the attacker at slot $\ell+1$. Additionally, let $e_\ell$ be the number of endorsement rights that the attacker owns at slot $\ell$. This figure illustrates the selfish endorsing attack with $e_{\ell-1}=2,\: e_\ell=14, \: p_\ell=1, \text{ and } n_{\ell+1}=2$. The top and bottom forks show the next two blocks for the honest and selfish chains respectively. Each block shows the priority of the baker who created it, $p$, and the number of endorsements that it includes, $e$. Note that $e$ is not  the number of endorsements that this block receives, but rather the number of endorsements for the block baked in the previous slot that it includes, thus both blocks created at slot $\ell$ have $e=32$ because they both include all 32 endorsements of the block baked at slot $\ell-1$.  The empty blocks labelled $A, \, B$, and $C$ correspond to timestamps when the honest network would expect the second block to appear. $A$ and $B$ are empty because $\mathcal{X}$ holds the top two priorities for slot $\ell+1$. $C$ is empty because the baker with priority 2 for slot $\ell+1$ is missing $\mathcal{X}$'s 14 endorsements from slot $\ell$, which incurs a $48$ second penalty.}
\end{figure*}

Delegate $\mathcal{X}$ is able to look ahead and calculate when this attack can be executed  because the baking priorities and endorsing rights are publicly known into the future. We verify that the selfish chain will create two blocks faster than the honest chain using Equation \ref{delay}. 

Let $\mathcal{D}_h$ and $\mathcal{D}_s$ be the total time to create two blocks for the honest and selfish chains, respectively. We have,
\begin{align}
    \mathcal{D}_h &= \mathcal{D}(0, 32) + \mathcal{D}(2, 18) = 248 \\
    \mathcal{D}_s &= \mathcal{D}(1, 32) + \mathcal{D}(0, 14) = 240 
\end{align}

The combination of having a large share of endorsements for slot $\ell$ and the two highest priorities for slot $\ell+1$ allows $\mathcal{X}$ to slow down the honest network by only endorsing a private block at slot $\ell$. Attacker $\mathcal{X}$ can then produce a second valid block before the honest network does, and  create the unique longest chain. Using Equations~\ref{rewardblock} and~\ref{rewardendorse}, we can verify that this selfish behavior also results in a greater reward than following the honest protocol, thus demonstrating that this is a profitable deviation in its own right, and not just an opportunity for a 1-confirmation double spend. 

For this, let $\mathcal{R}_{h}$ denote the total reward earned by $\mathcal{X}$ over the next two blocks while behaving honestly. The labels underneath the expressions indicate the reason each reward is added.
We have
\begin{align}
    \mathcal{R}_{h} &= \underbrace{2  \mathcal{R}_{e}(0)}_{e_{\ell-1}=2} + \underbrace{14  \mathcal{R}_{e}(0)}_{e_{\ell}=14} + \underbrace{\mathcal{R}_{b} (0, 32)}_{p_{\ell+1}=0}  \notag \\ 
    &= 4 + 28 + 16 = 48.
\end{align}

Similarly we can calculate the total rewards that the attacker $\mathcal{X}$ earns while selfishly endorsing, denoted $\mathcal{R}_{s}$. We have
\begin{align}
    \mathcal{R}_{s} &= \underbrace{2  \mathcal{R}_{e}(1)}_{e_{\ell-1}=2} + \underbrace{14  \mathcal{R}_{e}(0)}_{e_{\ell}=14} + \underbrace{\mathcal{R}_{b}(1, 32)}_{p_\ell = 1} + \underbrace{\mathcal{R}_{b}(0, 14)}_{p_{\ell+1}=0 \And e_{\ell}=14} \notag \\
    &= 2 + 28 + 8 + 14.2= 52.2.
\end{align}

This demonstrates that the gain in reward from creating a new block at slot $\ell$ outweighs the loss in reward from the endorsements on slot $\ell-1$ ending up on a lower priority block, and 
because the block baked in slot $\ell+1$ only includes 14 endorsements. Critically, the rewards for the 14 endorsements for slot $\ell$ (which are essential in slowing down the honest network), do not decrease at all because these endorsements are still included on the block baked by $\mathcal{X}$ at slot $\ell+1$, which has priority $0$. 

Similarly to selfish mining, if this attack were to be executed it would be easily detected through the increased presence of orphan blocks, and the baker who was stealing the block would be identified, which may act as a deterrent from this behavior. However, it is worth noting that there is no mechanism within the Emmy$^+$ mechanism to punish this block stealing attack. The protocol as originally defined has the ability to slash a delegate's block and endorsement rewards if they are caught signing multiple blocks of the same height \cite{goodman2014tezos}, but selfish endorsing is different in that is impossible to cryptographically prove that a delegate is misbehaving and thus a punishment cannot be programmed in to the protocol.

\subsection{Feasibility}

Figure~\ref{fig:2} describes just one instance of a whole family of length-2 selfish endorsing attacks. In order to calculate the probability that any such attack is possible, we develop a generalized model. Consider tuples of the form $t = (e_{\ell-1}, \: e_\ell,\: p_\ell,\: n_{\ell+1})$. We define $t$ as \textit{feasible} if the selfish chain can create two valid blocks faster than the honest chain with this combination of parameters.  Figure~\ref{fig:2} demonstrates an attack that is feasible with the tuple $t_0 = (e_{\ell-1}=2,\: e_\ell=14, \: p_\ell=1, \: n_{\ell+1}=2)$, but we want to find all such combinations of parameters with this property. As above, we use $\mathcal{D}_h \text{ and } \mathcal{D}_s$  to denote the total time  to create two blocks, for the honest and selfish networks respectively.  Let $\mathcal{D}_2$ be the difference in the selfish and honest times, i.e., 
\begin{align}
    \mathcal{D}_2 = \mathcal{D}_s - \mathcal{D}_h.
\end{align} 

An attack is feasible if $\mathcal{D}_s < \mathcal{D}_h$, which implies $\mathcal{D}_2 < 0$.
\begin{lemma}
$\mathcal{D}_2(p_\ell, e_{\ell}, n_{\ell+1}) = 40  (p_\ell - n_{\ell+1}) + 8  \max(24 - e_\ell, 0) - 8  \max(e_\ell -8, 0).$ 
\end{lemma}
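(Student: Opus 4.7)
The plan is to compute $\mathcal{D}_s$ and $\mathcal{D}_h$ directly from Equation~\ref{delay} by reading off the priority and endorsement count of each of the four blocks involved, then subtract.

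First I would catalog the four blocks. On the selfish chain, the slot-$\ell$ block is baked by $\mathcal{X}$ at priority $p_\ell$ and includes all $32$ endorsements from slot $\ell-1$ (since the attacker has every incentive to include a full endorsement set), while the slot-$(\ell{+}1)$ block is baked by $\mathcal{X}$ at priority $0$ (because $\mathcal{X}$ holds the top $n_{\ell+1}$ priorities) and includes exactly the $e_\ell$ endorsements for slot $\ell$ that $\mathcal{X}$ controls on the private fork. On the honest chain, the slot-$\ell$ block is baked at priority $0$ with the full $32$ endorsements from slot $\ell-1$, and the slot-$(\ell{+}1)$ block is baked at priority $n_{\ell+1}$ (the lowest index not held by $\mathcal{X}$, since priorities are zero-indexed) and can only include the $32 - e_\ell$ endorsements for slot $\ell$ that $\mathcal{X}$ does not control.

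Next I would apply Equation~\ref{delay} term by term, using that $\max(24 - 32, 0) = 0$ to simplify the two blocks carrying a full endorsement set: this gives $\mathcal{D}(p_\ell, 32) = 60 + 40 p_\ell$ and $\mathcal{D}(0, 32) = 60$. For the other two blocks, $\mathcal{D}(0, e_\ell) = 60 + 8\max(24 - e_\ell, 0)$ and $\mathcal{D}(n_{\ell+1}, 32 - e_\ell) = 60 + 40 n_{\ell+1} + 8\max\bigl(24 - (32 - e_\ell),\,0\bigr) = 60 + 40 n_{\ell+1} + 8\max(e_\ell - 8, 0)$. Summing the pairs and subtracting $\mathcal{D}_h$ from $\mathcal{D}_s$ cancels the four $60$ constants and produces the claimed expression.

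There is no real obstacle here beyond bookkeeping; the statement is essentially a direct substitution into the validity-time formula. The only subtle points to get right are the two accounting facts used above: (i) the first non-attacker priority at slot $\ell+1$ equals $n_{\ell+1}$ under zero-indexing, and (ii) the endorsements for slot $\ell$ split into $e_\ell$ on the private block and $32 - e_\ell$ on the public block, since an endorser can only sign one block at a given height. Once these are pinned down, the algebra is routine and the constant and priority contributions separate cleanly from the two $\max$ terms.
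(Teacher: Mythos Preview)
Your proposal is correct and follows essentially the same approach as the paper: compute $\mathcal{D}_s = \mathcal{D}(p_\ell,32) + \mathcal{D}(0,e_\ell)$ and $\mathcal{D}_h = \mathcal{D}(0,32) + \mathcal{D}(n_{\ell+1},32-e_\ell)$ directly from Equation~\ref{delay}, then subtract. Your explicit justification of why the honest slot-$(\ell{+}1)$ baker has priority exactly $n_{\ell+1}$ and why the endorsements split as $e_\ell$ versus $32-e_\ell$ is a nice addition that the paper leaves implicit.
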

\begin{proof}
First, we can express $\mathcal{D}_h$ in terms of $e_\ell, p_\ell$, and $n_{\ell+1}$ using Equation \ref{delay}, as 
\begin{align}
    \mathcal{D}_h(p_\ell, e_\ell, n_{\ell+1}) ={}& \underbrace{60}_{\mathcal{D}(0, 32)} +  \underbrace{60 + 40 n_{\ell+1}}_{\mathcal{D}(n_{\ell+1}, 32-e_\ell)}  \notag \\ 
    \quad{}&+ \underbrace{8 \max(24 - (32-e_\ell), 0)}_{\mathcal{D}(n_{\ell+1}, 32-e_\ell)} \notag \\ 
    ={}& 120 + 40  n_{\ell+1} + 8 \max(e_\ell - 8, 0).
\end{align}

Similarly, we can express $\mathcal{D}_s$ as
\begin{align}
    \mathcal{D}_s(p_\ell, e_\ell, n_{\ell+1}) &= \underbrace{60 + 40 p_\ell}_{\mathcal{D}(p_\ell, 32)} + \underbrace{60 + 8 \max(24-e_\ell, 0)}_{\mathcal{D}(0, e_\ell)} \notag \\ 
    &= 120 + 40  p_\ell + 8  \max(24 - e_\ell, 0).
\end{align}

Now we solve for $\mathcal{D}_2$: 
\begin{align}
    \mathcal{D}_2(p_\ell, e_\ell, n_{\ell+1}) ={}& \mathcal{D}_s(p_\ell, e_\ell, n_{\ell+1}) - \mathcal{D}_h(p_\ell, e_\ell, n_{\ell+1}) \notag \\ 
    ={}& 120 + 40  p_\ell + 8  \max(24 - e_\ell, 0) \notag \\
    \quad{}&- \left[120 + 40  n_{\ell+1} + 8  \max(e_\ell - 8, 0) \right] \notag \\ 
    ={}& 40  (p_\ell - n_{\ell+1}) + 8  \max(24 - e_\ell, 0) \notag \\
    \quad{}&- 8  \max(e_\ell -8, 0).
\end{align}
\end{proof}

Now if $\mathcal{D}_2<0$,  then the attack is feasible. Notice that $\mathcal{D}_2$ is not a function of $e_{\ell-1}$ because both chains will always include all the endorsement operations for the $n-1$ slot.

\subsection{Profitability} 

Now we need to  parameterize the reward functions. As above, we use $\mathcal{R}_{h}$ 
to denote the reward for delegate $\mathcal{X}$ playing honestly for the next two blocks, and $\mathcal{R}_{s}$ to denote the reward for selfish endorsing over that same span. Further, let $\mathcal{R}_{2}$ denote the difference in selfish and honest rewards, i.e.,
\begin{align}
    \mathcal{R}_{2} &= \mathcal{R}_{s} - \mathcal{R}_{h}.
\end{align}

An attack is \textit{profitable} if the amount of reward that $\mathcal{X}$ receives playing selfishly is greater than that which they would receive playing honestly, or if $\mathcal{R}_{2} > 0$. We are slightly abusing notation here, in that for delay, the subscripts $s \text{ and } h$ refer to the delay for $\mathcal{X}$ versus the delay for the rest of the network. In the case of rewards, the subscripts both refer to delegate $\mathcal{X}$ and correspond to selfish or honest behavior. 

We express $\mathcal{R}_{2}$ as a function of the tuple $t =  (e_{\ell-1}, e_\ell, p_\ell, n_{\ell+1})$.
In this case, $\mathcal{R}_{2}$ does not depend  on $n_{\ell+1}$.
\begin{lemma}\label{lem42}
$ \mathcal{R}_{2}(p_\ell, e_{\ell-1}, e_\ell) = 16 \!\left(\frac{1}{p_\ell +1} + \frac{e_\ell}{160} - \frac{1}{5}   \right) + 2e_{\ell-1} \!\left( \frac{1}{ p_\ell+1} - 1\right)$.
\end{lemma}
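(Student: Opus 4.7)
The plan is to compute $\mathcal{R}_h$ and $\mathcal{R}_s$ separately by enumerating which block-slots contribute baking and endorsement rewards to $\mathcal{X}$ in each scenario, then subtract and simplify.

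First I would account for the honest case. Under honest play, both the slot-$\ell$ and slot-$(\ell+1)$ blocks are baked at priority $0$ (the former by the honest top-priority baker, the latter by $\mathcal{X}$), and each includes every available endorsement for the preceding slot. Thus $\mathcal{X}$ collects: (i) $e_{\ell-1}$ endorsement rewards for slot $\ell-1$ included in a priority-$0$ block, giving $e_{\ell-1}\,\mathcal{R}_e(0)$; (ii) $e_\ell$ endorsement rewards for slot $\ell$ included in a priority-$0$ block, giving $e_\ell\,\mathcal{R}_e(0)$; and (iii) one baking reward $\mathcal{R}_b(0,32)$ at slot $\ell+1$. Using \eqref{rewardblock} and \eqref{rewardendorse} this yields a closed form of $\mathcal{R}_h = 2e_{\ell-1} + 2e_\ell + 16$.

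Next I would handle the selfish case. Here $\mathcal{X}$ bakes the slot-$\ell$ block at priority $p_\ell$ including all $32$ endorsements for slot $\ell-1$, and then bakes a priority-$0$ block at slot $\ell+1$ that (critically) includes exactly the $e_\ell$ endorsements for slot $\ell$ owned by $\mathcal{X}$. So the contributions are: (i) $e_{\ell-1}\,\mathcal{R}_e(p_\ell)$ for the slot $\ell-1$ endorsements, now included in a priority-$p_\ell$ block; (ii) $e_\ell\,\mathcal{R}_e(0)$ for the slot $\ell$ endorsements, still included in a priority-$0$ block; (iii) $\mathcal{R}_b(p_\ell, 32)$ for baking at slot $\ell$; and (iv) $\mathcal{R}_b(0, e_\ell)$ for baking at slot $\ell+1$. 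Plugging into \eqref{rewardblock} and \eqref{rewardendorse} gives
\[
\mathcal{R}_s = \frac{2e_{\ell-1}}{p_\ell+1} + 2e_\ell + \frac{16}{p_\ell+1} + 16\!\left(\tfrac{4}{5} + \tfrac{e_\ell}{160}\right).
\]

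Finally I would form $\mathcal{R}_2 = \mathcal{R}_s - \mathcal{R}_h$ and regroup: the $e_{\ell-1}$-terms combine into $2e_{\ell-1}\bigl(\tfrac{1}{p_\ell+1} - 1\bigr)$, while the remaining terms $\tfrac{16}{p_\ell+1} + 16\bigl(\tfrac{4}{5} + \tfrac{e_\ell}{160}\bigr) - 16 - 2e_\ell + 2e_\ell$ collapse to $16\bigl(\tfrac{1}{p_\ell+1} + \tfrac{e_\ell}{160} - \tfrac{1}{5}\bigr)$, matching the claimed expression. The only real obstacle is bookkeeping: one must be careful that the $e_\ell$ slot-$\ell$ endorsements earn $\mathcal{R}_e(0)$ in both scenarios (because $\mathcal{X}$ includes them in a priority-$0$ block at slot $\ell+1$ under either strategy), so that these terms cancel and $\mathcal{R}_2$ does not depend on $n_{\ell+1}$. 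Once this observation is made, the derivation is a straightforward substitution into \eqref{rewardblock} and \eqref{rewardendorse}.
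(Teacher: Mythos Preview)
Your proposal is correct and follows essentially the same approach as the paper: compute $\mathcal{R}_h$ and $\mathcal{R}_s$ by enumerating the baking and endorsement rewards at slots $\ell$ and $\ell+1$ via \eqref{rewardblock} and \eqref{rewardendorse}, then subtract and regroup. Your explicit remark that the $e_\ell$ endorsement rewards are $\mathcal{R}_e(0)$ in both scenarios (and hence cancel) is a nice clarification that the paper leaves implicit.
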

\begin{proof}
First we focus on the reward for behaving honestly. We must take into account the endorsement rewards and the block rewards for the next two blocks. Using Equations \ref{rewardblock} and \ref{rewardendorse}, we define the total reward for honest behavior over the next two blocks as,
\begin{align}
    \mathcal{R}_{h}(e_{\ell-1}, e_\ell)  &= \underbrace{e_{\ell-1} \cdot \mathcal{R}_{e}(0)}_{\texttt{slot } \ell \texttt{ rewards}} + \underbrace{e_\ell \cdot \mathcal{R}_{e}(0) + \mathcal{R}_{b} (0, 32)}_{\texttt{slot } \ell +1\texttt{ rewards}} \notag \\
    &= 2e_{\ell-1} + 2 e_\ell + 16 = 2 (e_{\ell-1} + e_\ell) + 16.
\end{align}

We categorize the rewards for the endorsements under the next slot because that is where they are included in a block. Similarly, we calculate the total reward for selfish behavior over the next two blocks as,
\begin{align}
    \mathcal{R}_{s}(&e_{\ell-1}, e_\ell, p_\ell) ={} \underbrace{e_{\ell-1} \cdot \mathcal{R}_{e}(p_\ell) + \mathcal{R}_{b}(p_\ell, 32)}_{\texttt{slot } \ell \texttt{ rewards}} 
  \notag   \\ & \quad \quad \quad  \quad \quad \quad  + \underbrace{e_\ell \cdot \mathcal{R}_{e}(0) + \mathcal{R}_{b}(0, e_\ell)}_{\texttt{slot } \ell +1 \texttt{ rewards}} \notag \\
    ={}& \frac{2}{p_\ell +1}  e_{\ell-1} + \frac{16}{p_\ell+1} 
     + 2 e_\ell + 16  \left(\frac{4}{5} + \frac{1}{5} \cdot \frac{e_\ell}{32}\right) \notag \\
    ={}& 2  \left( \frac{e_{\ell-1}}{p_\ell +1} + e_\ell \right) 
     + 16  \left(\frac{1}{p_\ell+ 1} + \frac{4}{5} + \frac{e_\ell}{160} \right).
\end{align}

Now we solve for $\mathcal{R}_{2}$, obtaining
\begin{align}
&    \mathcal{R}_{2} (e_{\ell-1}, e_\ell, p_\ell) ={} \mathcal{R}_{s}(e_{\ell-1}, e_\ell, p_\ell) - \mathcal{R}_{h}(e_{\ell-1}, e_\ell) \notag \\ 
    ={}& 2  \left( \frac{e_{\ell-1}}{p_\ell +1} + e_\ell - e_{\ell-1} - e_\ell \right) \notag
    \\ \quad{}&+ 16  \left(\frac{1}{p_\ell+ 1} + \frac{4}{5} + \frac{e_\ell}{160} - 1\right) \notag \\
    ={}& 2e_{\ell-1}  \left( \frac{1}{p_\ell + 1} - 1\right)
    + 16  \left(\frac{1}{p_\ell+1} + \frac{e_\ell}{160} - \frac{1}{5} \right).
\end{align}
\end{proof}
Now if $\mathcal{R}_{2} > 0$, then an attack parameterized is profitable.

\subsection{Probability}

As a third step, we can calculate the probability of a tuple $t = (e_{\ell-1}, e_\ell, p_\ell, n_{\ell+1})$ occurring randomly on the chain. Let $\alpha$ denote the fraction of active rolls that $\mathcal{X}$ owns. Given this, the probability of $\mathcal{X}$ receiving any priority or endorsement for slot $\ell$ is $\alpha$. Let $\mathcal{P} \sim \text{Geometric}(\alpha)$ be the random variable representing the number of consecutive slots not owned by $\mathcal{X}$.\footnote{Recall the intuition for a geometric random variable as counting the number of failures until the first success, where success has probability $\alpha$. In this case a failure is defined as not being allocated a priority, which happens with probability $1-\alpha$.} The distribution on $\mathcal{X}$ being allocated each of the first $n$ priorities in slot $\ell+1$ is also a geometric random variable, but with probability $(1-\alpha)$, and denoted $\mathcal{N} \sim \text{Geometric}(1-\alpha)$. Lastly, the distribution on $\mathcal{X}$  being allocated $e_\ell$ endorsement rights for slot $\ell$ is a binomial random variable with 32 draws, denoted $\mathcal{E} \sim \text{Binomial}(32, \alpha)$. Thus, we can  calculate the probability of tuple $t= (e_{\ell-1}, e_\ell, p_\ell, n_{\ell+1})$ given $\alpha$, as
\begin{align}
&    \mathrm{Pr} [ \, t \: | \: \alpha \,] ={} \underbrace{(1 - \alpha)^{p_\ell}\alpha}_{\mathrm{Pr}[\mathcal{P} = p_\ell]} \times \underbrace{\alpha^{n_{\ell+1}} (1 - \alpha)}_{\mathrm{Pr}[\mathcal{N} = n_{\ell+1}]} \notag
    \\ \quad{}&\times \underbrace{ \binom{32}{e_{\ell-1}} \alpha^{e_{\ell-1}} (1-\alpha)^{32-e_{\ell-1}}}_{\mathrm{Pr}[\mathcal{E} = e_{\ell-1}]} 
    \times\underbrace{\binom{32}{e_{\ell}} \alpha^{e_{\ell}}(1-\alpha)^{32-e_{\ell}}}_{\mathrm{Pr}[\mathcal{E} = e_\ell]}\notag \\
    ={}& \binom{32}{e_{\ell-1}} \cdot \binom{32}{e_{\ell}} \cdot  \alpha^{n_{\ell+1} + e_{\ell-1} + e_\ell + 1}\cdot(1 - \alpha)^{65 +p_\ell  - e_{\ell-1} - e_{\ell}}.
\end{align}

\subsection{Generalizing}

We can now  calculate the probability of this family of length-2 attacks occurring. Let  $\mathcal{A}_2$ denote all tuples for which the attack is feasible (i.e.,  it creates two blocks faster than the honest network) and profitable (i.e., it has higher rewards than playing honestly). We have,
\begin{align}
\mathcal{A}_2 &= 
\{(e_{\ell-1}, e_\ell, p_\ell, n_{\ell+1}) \: | \: \mathcal{D}_2 < 0 \: \land \: \mathcal{R}_{2} > 0 \}.
\end{align}

We also want to measure how profitable these attacks are. Let $\mathcal{V}_2$ be the expected increase in reward of the attacks in $\mathcal{A}_2$ (i.e., how much more profit the selfish behavior 
will generate than by playing honestly). We have,
\begin{align}
    \mathcal{V}_2 &= \sum_{t \in \mathcal{A}_2} \mathrm{Pr} [ \, t \: | \: \alpha \,]  (\mathcal{R}_{s} (t) - \mathcal{R}_{h}(t)).
\end{align}

Procedure 1 (in Appendix \ref{appb}) demonstrates  these calculations, and the blue columns in Table \ref{tab:tab1} show the results of the analysis. Let $\mathcal{C} = 365 \cdot 24 \cdot 60$ represent the number of minutes in a year, so $\mathcal{C} \cdot \mathrm{Pr}[\mathcal{A}_2]$ is the expected number of attacks per year and $\mathcal{C} \cdot \mathcal{V}_2$ is the expected increase in value (in XTZ) for following the selfish policy for a year. This shows the attack is not a serious threat, given that even with 40\% of the stake, $\mathcal{X}$ is only expected to earn 254.94 XTC ($\approx$ \$307.23 in November 2019) more than if they had played honestly for the year. Regardless, this is an example of how this type of attack could be formulated against a  longest-chain PoS system.

\begin{table*}
    \centering
\begin{tabular}{|c||cc|c||cc|c|}
\hline
$\alpha$ & \multicolumn{2}{c|}{$\mathcal{C} \cdot \mathrm{Pr}[\mathcal{A}_2]$} & \% & \multicolumn{2}{c|}{$\mathcal{C} \cdot \mathcal{V}_2 $} & \% \\ \hline \hline
0.1 &\cellcolor{LightCyan} 0.04 & \cellcolor{junebud}0.17 & 425\% & \cellcolor{LightCyan}0.09 & \cellcolor{junebud}0.21 & 233\% \\
0.15 &\cellcolor{LightCyan} 3.88 & \cellcolor{junebud}2.16 & 56\% & \cellcolor{LightCyan}7.07 & \cellcolor{junebud}2.02 & 29\% \\
0.2 &\cellcolor{LightCyan} 33.91 & \cellcolor{junebud}7.70 & 23\% & \cellcolor{LightCyan}52.61 & \cellcolor{junebud}6.10 & 12\% \\
0.25 &\cellcolor{LightCyan} 136.76 & \cellcolor{junebud}12.91 & 9.4\% & 1\cellcolor{LightCyan}75.91 & \cellcolor{junebud}9.00 & 5.1\% \\
0.3 &\cellcolor{LightCyan} 309.66 & \cellcolor{junebud}12.66 & 4.1\% & \cellcolor{LightCyan}324.55 & \cellcolor{junebud}7.92 & 2.4\% \\
0.35 &\cellcolor{LightCyan} 407.33 & \cellcolor{junebud}8.07 & 2.0\% & \cellcolor{LightCyan}361.14 & \cellcolor{junebud}4.60 & 1.3\% \\
0.4 &\cellcolor{LightCyan} 318.98 & \cellcolor{junebud}3.53 & 1.1\% & \cellcolor{LightCyan}254.94 & \cellcolor{junebud}1.85 & 0.7\% \\ \hline
\end{tabular}
    \caption{Let $\alpha$ be the percentage of rolls that the attacker owns, and let $\mathcal{C} = 365 \cdot 24 \cdot 60$ be the number of minutes in a year. Then $\mathcal{C} \cdot \mathrm{Pr}[\mathcal{A}_2]$ represents the number of times the attack will be feasible and profitable over the course of a year, and $\mathcal{C} \cdot \mathcal{V}_2 $ will be the additional profit over honest behavior that the attacker makes in a year. The light blue cells represent the results under the current implementation and the green cells represent the results after our heuristic fix (described in Section~\ref{sec:4}) is applied. The \% column shows the respective ratio of the green column to the blue column (e.g., 1\% means that after the fix, the value of the attack is 1\% of what it is currently). A subtle feature of these results is that the probability of an attack being profitable and feasible is not monotone increasing with respect to $\alpha$. This is due to the fact that the endorsements owned by the attacker for the $\ell-1$ slot will end up on a lower priority block and thus be worth less in the case of an attack. As a result, the more endorsement slots owned by the attacker at $\ell-1$, the less profit they stand to gain by stealing the block. We see that there is a balance between having a large enough $\alpha$ to make the attack feasible, but not too large as to have a high probability of having many endorsement rights. We numerically calculated that an attacker makes the most profit when $\alpha=0.351$.}
    \label{tab:tab1}
\end{table*}

\section{A Heuristic Fix to Emmy$^+$}
\label{sec:4}

The profit from the attack can be reduced by introducing a simple modification to the Emmy$^+$ consensus protocol. Recall that in Emmy$^+$ the endorsement reward defined in Equation \ref{rewardendorse} is a function of the priority of the block that includes the endorsement, denoted $p_i$. If the rewards for endorsements are switched to being a function of the block that they endorse, denoted $p_e$, the attacks occur less frequently.

The green columns in Table \ref{tab:tab1} represent the probability and value of the attack at different levels of $\alpha$ after this fix, and the \% column reports the improvement over the status quo. Recall that $\alpha$ represents the fraction of active rolls that are owned by $\mathcal{X}$. The only exception to this reduction is the case of $\alpha=0.1$, where we observe that both the probability and the value of the attacks rise as a result of the suggested change. Since the expected increase in value of 0.21 XTZ $\approx \$0.25$ over a year is effectively no increase in profit, the fix still seems reasonable. See the caption of Table \ref{tab:tab1} for more discussion.

The values of the green columns in Table \ref{tab:tab1} were calculated using Procedure 1 in Appendix \ref{appb}, but with the reward function presented in Lemma~\ref{lemmareduced}. 
We defer the proof of this lemma to the Appendix  because it is similar to Lemma~\ref{lem42}.

\begin{lemma}
\label{lemmareduced}
If the endorsement rewards $\mathcal{R}_{e}$ are instead a function of the block that they endorse, $p_e$, then $ \mathcal{R}_{2}(p_\ell, e_{\ell-1}, e_\ell) = 16 \cdot \left(\frac{1}{p_\ell +1} + \frac{e_\ell}{160} - \frac{1}{5}   \right) + 2e_{\ell} \left( \frac{1}{ p_\ell+1} - 1\right)$.
\end{lemma}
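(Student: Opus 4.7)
The plan is to mirror the proof of Lemma~\ref{lem42}, changing only the endorsement-reward bookkeeping to reflect that $\mathcal{R}_e$ now depends on the priority $p_e$ of the endorsed block rather than the priority $p_i$ of the including block. First I would pin down the priority of each endorsed block. In both chains, the slot $\ell-1$ block has priority $0$ since it is inherited from the honest prefix. At slot $\ell$, the honest block has priority $0$ while the selfish block has priority $p_\ell$. Hence, under the new rule, the $e_{\ell-1}$ endorsements pay $\mathcal{R}_e(0)$ under both strategies, while the $e_\ell$ endorsements pay $\mathcal{R}_e(0)$ honestly and $\mathcal{R}_e(p_\ell)$ selfishly. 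The block rewards are unchanged from Lemma~\ref{lem42}, since the fix leaves $\mathcal{R}_b$ untouched.

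Next I would assemble the two totals. The honest total is $\mathcal{R}_h = 2 e_{\ell-1} + 2 e_\ell + 16$, identical to the one computed in Lemma~\ref{lem42} because both blocks that the honest attacker endorses still have priority $0$. The selfish total becomes $\mathcal{R}_s = 2 e_{\ell-1} + \frac{16}{p_\ell+1} + \frac{2 e_\ell}{p_\ell+1} + 16\!\left(\frac{4}{5} + \frac{e_\ell}{160}\right)$, where the first two terms come from the slot-$\ell$ block and its endorsements, and the last two from the slot-$\ell+1$ block and its endorsements. Subtracting, the $2 e_{\ell-1}$ terms cancel, and a routine simplification (the same one carried out in the proof of Lemma~\ref{lem42}) collects the remaining contributions into $16\!\left(\frac{1}{p_\ell+1} + \frac{e_\ell}{160} - \frac{1}{5}\right) + 2 e_\ell\!\left(\frac{1}{p_\ell+1} - 1\right)$, which is the claimed expression.

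There is no real obstacle here; the only conceptual point to flag is why the factor $\left(\frac{1}{p_\ell+1} - 1\right)$, which attached to $e_{\ell-1}$ in Lemma~\ref{lem42}, now attaches to $e_\ell$. The reason is structural: under the original rule, the penalty for being tied to a low-priority block was borne by $e_{\ell-1}$, since those endorsements were included in the selfish block of priority $p_\ell$; under the new rule, $e_{\ell-1}$ endorses a priority-$0$ block in either strategy and so drops out of $\mathcal{R}_2$ entirely, while the penalty migrates onto $e_\ell$, whose endorsed target has dropped from priority $0$ (honest) to priority $p_\ell$ (selfish). Verifying that this shift, combined with the unchanged baker-reward contributions $\mathcal{R}_b(p_\ell, 32)$ and $\mathcal{R}_b(0, e_\ell)$, reproduces the stated formula is the only step requiring direct computation.
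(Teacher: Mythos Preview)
Your proposal is correct and follows essentially the same approach as the paper's proof in Appendix~\ref{appa}: compute $\mathcal{R}_h$ and $\mathcal{R}_s$ term by term under the modified endorsement rule, then subtract and simplify. Your added paragraph explaining why the $\left(\tfrac{1}{p_\ell+1}-1\right)$ penalty migrates from $e_{\ell-1}$ to $e_\ell$ is a nice conceptual gloss that the paper's proof omits, but the underlying computation is identical.
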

\begin{proof}
See Appendix \ref{appa}.
\end{proof}

This is not a security proof, but rather a heuristic change that decreases the probability and profitability of selfish endorsing for most values of $\alpha$. 

\section{Alternative Fix: New Delay and Reward Functions}
\label{sec:5}

In this section, we present modifications to both the delay and reward functions that would  
make the PoS protocol provably secure against a specific attack vector (in this case, length-1 and length-2 selfish endorsing).  In contrast to the heuristic fix in Section~\ref{sec:4}, we  change the validity-time and reward functions, as opposed to simply switching which block is used to calculate endorsement rewards.\footnote{The modified   reward scheme was proposed by Arthur Breitman during discussions about  potential tweaks to the Emmy$^+$ protocol in response to an earlier version of this paper.}
 In practice, creating a fully secure PoS protocol is not as simple as implementing these functions, because there are other long-range forking attacks that must be taken into account. If the delay for each drop in baker priority is too large, then the honest network is slowed dramatically, which allows for an attacker to have more time to create a longer private chain.

Let $\mathcal{D}'$ be the new delay function for a block being valid, which is a function of the priority of the baker, $p$, and the number of endorsements it includes, $e$. We define this as,
\begin{align}
\label{eq:newdelay}
    \mathcal{D}'(p, e) &= 60 + 193 p + 8  \max(24-e,0).
\end{align}

Relative to Emmy$^+$, the  new component is the amount of time added for each drop in priority of the baker. In Emmy$^+$ this value is 40, while in this revised expression $\mathcal{D}'$ it is 193. This delay function was constructed to ensure that the highest priority block can \textit{always} be published before the second highest priority (see Lemma \ref{lemmadouble}). This is because given the following reward functions, length-1 block stealing attacks are profitable in some cases (they are not profitable in Emmy$^+$ because of Lemma \ref{lemmasingle}), so in order to insure they do not occur, the delay function ensures that they are never feasible. 

Let $\mathcal{R}_{b}'$ be the modified reward function for a block baked with priority $p$, and including $e$ endorsements. It maintains an 80 XTZ per block inflation rate, but splits the rewards 40/40 between the baker and the endorsers:
\begin{align}\label{newrewardblock}
    \mathcal{R}_{b}'(p, e) =  \frac{5}{4} \cdot \frac{e}{p+1}
\end{align}

Let $\mathcal{R}_{e}'(p_i)$ denote the reward for an endorsement that is included in a block baked with priority $p_i$. We have,

\begin{align}\label{newrewardendorse}
    \mathcal{R}_{e}'(p_i) = \frac{5}{4} \cdot \frac{1}{p_i+1}
\end{align}

We see that if a block is baked with priority 0 and all 32 endorsements are included from the previous block, then the total reward for the block becomes $\mathcal{R}_{b}'(0,32) + 32 \mathcal{R}_{e}'(0) = 40 + 40 = 80$ XTZ. 
\medskip

In Section~\ref{sec:51}, we prove that length-1 selfish endorsing attacks are not feasible under this new delay schedule. In Section~\ref{sec:52}, we 
prove that length-2 selfish endorsing attacks are not profitable under this new reward protocol.

\subsection{Security Against Length-1 Attacks}
\label{sec:51}

Lemma~\ref{lemmasingle} establishes a useful result. It explains why we have 
not needed to consider single-block selfish endorsing attacks until this point.
\begin{lemma}\label{lemmasingle}
For any tuple $(e_{\ell-1}, p_\ell)$, the length-1 selfish endorsing attack is not profitable under Emmy$^+$. 
\end{lemma}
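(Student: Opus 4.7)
The plan is to mirror the length-2 derivation of Section~\ref{sec:3} in the one-slot setting and then show the resulting feasibility and profitability conditions cannot coexist. Throughout I would assume $p_\ell \geq 1$, since the case $p_\ell = 0$ is vacuous: $\mathcal{X}$ already holds top priority at slot~$\ell$ and has nothing to steal.

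First I would compute the single-slot reward gap $\mathcal{R}_1 := \mathcal{R}_s - \mathcal{R}_h$. Honestly, $\mathcal{X}$ bakes nothing at slot~$\ell$ and only collects its $e_{\ell-1}$ endorsement rewards carried by the priority-0 block, giving $e_{\ell-1} \cdot \mathcal{R}_e(0) = 2 e_{\ell-1}$. Selfishly, $\mathcal{X}$ bakes the slot-$\ell$ block itself at priority $p_\ell$ with all 32 endorsements included, earning $\mathcal{R}_b(p_\ell, 32) + e_{\ell-1} \cdot \mathcal{R}_e(p_\ell) = (16 + 2 e_{\ell-1})/(p_\ell + 1)$ by \eqref{rewardblock} and \eqref{rewardendorse}. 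Subtracting yields
\begin{align*}
\mathcal{R}_1 = \frac{16 - 2 e_{\ell-1} p_\ell}{p_\ell + 1},
\end{align*}
so profitability ($\mathcal{R}_1 > 0$) is equivalent to $e_{\ell-1}\, p_\ell < 8$.

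Next I would compute the single-slot feasibility gap $\mathcal{D}_1 := \mathcal{D}_s - \mathcal{D}_h$. Over a one-slot horizon, the attacker's only lever for slowing the honest block is to withhold its $e_{\ell-1}$ signatures of the slot-$(\ell-1)$ block, so the priority-0 honest block is forced to include only $32 - e_{\ell-1}$ endorsements. By \eqref{delay}, this gives $\mathcal{D}_h = 60 + 8\max(e_{\ell-1}-8,\,0)$, while the selfish block (priority $p_\ell$, all 32 endorsements) has $\mathcal{D}_s = 60 + 40 p_\ell$. Feasibility ($\mathcal{D}_1 < 0$) therefore reduces to $e_{\ell-1} > 8 + 5 p_\ell$.

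Finally, I would combine the two inequalities: for any $p_\ell \geq 1$, feasibility forces $e_{\ell-1} \geq 14$, whence $e_{\ell-1}\, p_\ell \geq 14 > 8$, which contradicts profitability. Hence no tuple $(e_{\ell-1}, p_\ell)$ yields a length-1 attack that is simultaneously feasible and profitable under Emmy$^+$. The only subtle step is justifying that endorsement withholding at slot $\ell-1$ is the attacker's sole lever for delaying the honest block in a one-slot horizon; once that is pinned down, the lemma collapses to an elementary arithmetic comparison of two integer inequalities.
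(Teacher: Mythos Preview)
Your argument is correct and actually a bit cleaner than the paper's. Two differences are worth noting. First, you model the selfish slot-$\ell$ block as including all $32$ endorsements of slot $\ell-1$, which is consistent with the paper's own length-2 model (where $\mathcal{D}_s$ starts with $\mathcal{D}(p_\ell,32)$); the paper's length-1 proof instead has the selfish block carry only $e_{\ell-1}$ endorsements, giving a feasibility threshold of $e_{\ell-1}=19$ rather than your $14$. Since your modeling choice is the more favorable one for the attacker, your non-profitability conclusion is at least as strong. Second, methodologically the paper reduces to the attacker-optimal case $p_\ell=1$, locates the smallest feasible $e_{\ell-1}$, and then numerically checks $\mathcal{R}_{h,1}(19)=38>26.35=\mathcal{R}_{s,1}(19)$, arguing monotonicity handles the remaining cases. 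You instead derive the two closed-form constraints $e_{\ell-1}p_\ell<8$ and $e_{\ell-1}>8+5p_\ell$ and show they are jointly infeasible for every $p_\ell\ge 1$, which covers all priorities uniformly without a separate best-case reduction. Both routes establish the same claim; yours is more algebraic and self-contained, the paper's is more computational.
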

\begin{proof}
See Appendix \ref{appc}.
\end{proof}

What about the new delay schedule~\eqref{eq:newdelay}? In fact, we show that a length-1 selfish endorsing attack is never even feasible. Assuming the honest network has the highest priority baking rights, let $\mathcal{D}'_{h,1}$ be the time for the honest network to create a single block under the new delay function, and $\mathcal{D}'_{s,1}$ be the time for the selfish delegate to create a single block.
\begin{lemma}\label{lemmadouble} 
Under the modified PoS protocol, for any tuple $(e_{\ell-1}, p_\ell)$, we have $\mathcal{D}'_{h,1}(p_\ell, e_{\ell-1}) < \mathcal{D}'_{s,1}(p_\ell, e_{\ell-1})$.
\end{lemma}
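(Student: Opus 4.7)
The plan is to directly compute both sides of the claimed inequality in the worst case for the honest network, and to show that the constant $193$ in $\mathcal{D}'$ was chosen precisely to make the inequality tight but strict.

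First I would observe that a length-1 selfish endorsing attack only makes sense when the attacker does not already hold the highest priority, i.e.\ $p_\ell \geq 1$; otherwise the attacker would be baking the honest block. To maximize the honest delay (giving the attacker the best chance), I would assume that the attacker withholds all $e_{\ell-1}$ of their slot $\ell-1$ endorsements from the honest network. Then the honest priority-$0$ baker can include only the remaining $32 - e_{\ell-1}$ endorsements, while the selfish baker, bringing their withheld endorsements onto their private block, includes all $32$. Plugging into~\eqref{eq:newdelay},
\begin{align}
\mathcal{D}'_{h,1}(p_\ell,e_{\ell-1}) &= \mathcal{D}'(0,\,32-e_{\ell-1}) = 60 + 8\max(e_{\ell-1}-8,\,0),\notag\\
\mathcal{D}'_{s,1}(p_\ell,e_{\ell-1}) &= \mathcal{D}'(p_\ell,\,32) = 60 + 193\,p_\ell.\notag
\end{align}

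Next I would bound each side. Since $0 \leq e_{\ell-1} \leq 32$, we have $\max(e_{\ell-1}-8,\,0) \leq 24$, so $\mathcal{D}'_{h,1} \leq 60 + 192 = 252$. On the other hand, for any $p_\ell \geq 1$, $\mathcal{D}'_{s,1} \geq 60 + 193 = 253$. Chaining these two bounds,
\begin{equation}
\mathcal{D}'_{h,1}(p_\ell,e_{\ell-1}) \leq 252 < 253 \leq \mathcal{D}'_{s,1}(p_\ell,e_{\ell-1}),\notag
\end{equation}
which is the claim.

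There is no real obstacle: the argument is essentially just plugging into the definition and observing that the two worst-case constants differ by one. What is worth highlighting in the write-up is the design rationale---namely that $193$ is the smallest integer coefficient on $p$ that strictly dominates the maximum endorsement penalty $8\cdot 24 = 192$, which is exactly what makes a length-1 steal infeasible while keeping the honest-case delay schedule as tight as possible. I would also briefly remark that the $p_\ell = 0$ case is excluded by the definition of the attack (the attacker cannot steal from themselves), so the restriction $p_\ell \geq 1$ used in the final inequality is without loss of generality.
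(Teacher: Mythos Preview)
Your proof is correct and follows essentially the same route as the paper: bound the honest delay above by $60+8\cdot 24=252$ and the selfish delay below (using $p_\ell\geq 1$) by $60+193=253$, then chain the inequalities. Your explicit formulas for $\mathcal{D}'_{h,1}$ and $\mathcal{D}'_{s,1}$ as functions of $e_{\ell-1}$ and $p_\ell$, together with the remark on why $193$ is the minimal integer coefficient that works, add useful clarity but do not change the argument.
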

\begin{proof}
In the worst case, the honest network will not receive any endorsements, so the slowest the block creation could be is,
\begin{align}
    \mathcal{D}'_{h,1}(p_\ell, e_{\ell-1}) &\leq \mathcal{D}'_{h,1}(0, 0) = 60 + 8 \cdot 24  
    \leq 252.
\end{align}

In the best case for the attacker, they will own all 32 endorsements, and a priority of 1 ($2^{nd}$ best) in the block, and 
\begin{align}
    \mathcal{D}'_{s,1}(p_\ell, e_{\ell-1}) &\geq \mathcal{D}'_{s,1}(1, 32) = 60 + 193 =     253.
\end{align}

Putting this together, we have
\begin{align}
    \mathcal{D}'_{h,1} \leq 252 < 253 \leq \mathcal{D}'_{s,1}\  \Rightarrow \ \mathcal{D}'_{h,1} < \mathcal{D}'_{s,1}.
\end{align}
\end{proof}

\subsection{Security Against Length-2 Attacks}
\label{sec:52}

We now turn to length-2 selfish endorsing attacks, and will establish that they are never profitable under the modified protocol. We first need a closed form representation of the rewards $\mathcal{X}$ would receive  under the new reward function, denoted $\mathcal{R}'_{h}$ and  $\mathcal{R}'_{s}$ for playing  honestly and selfishly, respectively. The  derivation of Lemma~\ref{lemmanewrew} is  similar to Lemma~\ref{lem42}, and  deferred to Appendix \ref{appd}.
\begin{lemma}\label{lemmanewrew}
Under the modified PoS protocol, the total reward for a rational delegate $\mathcal{X}$ playing honestly over the next two blocks with the tuple $(e_{\ell-1}, e_\ell, p_\ell)$ is
\begin{align}
     \mathcal{R}'_{h}(e_{\ell-1}, e_\ell)  &= 1.25  (e_{\ell-1} + e_\ell + 32),
\end{align}
and the total reward for $\mathcal{X}$ to play selfishly over the next two blocks  is
\begin{align}
    \mathcal{R}'_{s}(e_{\ell-1}, e_\ell, p_\ell) &= 2.5 \left( \frac{e_{\ell-1}}{p_\ell+1} + e_\ell\right).
\end{align}
 \end{lemma}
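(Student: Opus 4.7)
The plan is to mirror the derivation of Lemma~\ref{lem42}, enumerating the same four reward streams that accrue to $\mathcal{X}$ across slots $\ell$ and $\ell+1$, but substituting the modified payout rules $\mathcal{R}_{b}'$ and $\mathcal{R}_{e}'$ from (\ref{newrewardblock}) and (\ref{newrewardendorse}). In each scenario I would annotate every reward by the block that pays it, tracking two pieces of information per block: the priority at which it was baked, and the set of endorsements it contains. These two items fully determine the per-stream payout under the new scheme.

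For $\mathcal{R}'_{h}$, the honest top-priority baker creates the slot-$\ell$ block and includes all 32 endorsements for slot $\ell-1$, so $\mathcal{X}$ collects $e_{\ell-1}\mathcal{R}'_{e}(0)$ from that block. At slot $\ell+1$, $\mathcal{X}$ itself bakes at priority $0$ and includes all 32 endorsements for slot $\ell$, collecting $\mathcal{R}'_{b}(0,32) + e_\ell\mathcal{R}'_{e}(0)$. Using $\mathcal{R}'_{e}(0)=5/4$ and $\mathcal{R}'_{b}(0,32)=40=(5/4)\cdot 32$, the three terms factor as $1.25\,(e_{\ell-1}+e_\ell+32)$, which is the first claim.

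For $\mathcal{R}'_{s}$, the same four streams are still present but both the priority of the slot-$\ell$ block and the endorsement count of the slot-$(\ell+1)$ block change. At slot $\ell$, $\mathcal{X}$ bakes at priority $p_\ell$ and still includes all 32 slot-$(\ell-1)$ endorsements, producing $\mathcal{R}'_{b}(p_\ell,32) + e_{\ell-1}\mathcal{R}'_{e}(p_\ell)$. At slot $\ell+1$, $\mathcal{X}$ bakes at priority $0$ but now only the $e_\ell$ endorsements it cast for its own private slot-$\ell$ block are visible on the selfish chain, so the block reward scales with $e_\ell$ rather than $32$, yielding $\mathcal{R}'_{b}(0,e_\ell) + e_\ell\mathcal{R}'_{e}(0)$. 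Because $\mathcal{R}'_{b}$ and $\mathcal{R}'_{e}$ share the common $5/4$ prefactor and factor cleanly through $1/(p+1)$, the slot-$\ell$ terms collapse into a single $1/(p_\ell+1)$ fraction and the slot-$(\ell+1)$ terms collapse into a multiple of $e_\ell$; combining these and pulling out the common factor of $5/2$ gives the stated closed form.

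The only real obstacle is keeping the bookkeeping straight in the selfish branch: the block at slot $\ell+1$ extends the private block and can therefore reference only $e_\ell$ endorsements, and $\mathcal{X}$'s $e_{\ell-1}$ endorsements migrate from a priority-$0$ host (in the honest world) to a priority-$p_\ell$ host (in the selfish world), shrinking their per-endorsement payout by a factor of $(p_\ell+1)$. Once both adjustments are applied, the algebra is routine and the derivation can be deferred to the appendix exactly as is done for Lemma~\ref{lem42}.
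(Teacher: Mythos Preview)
Your proposal is correct and matches the paper's own proof essentially line for line: the paper likewise says the derivation ``is similar to the proof of Lemma~\ref{lem42}, but substituting for the new reward functions,'' and it sums exactly the four streams you list---$e_{\ell-1}\mathcal{R}'_e(0)+e_\ell\mathcal{R}'_e(0)+\mathcal{R}'_b(0,32)$ in the honest case and $e_{\ell-1}\mathcal{R}'_e(p_\ell)+\mathcal{R}'_b(p_\ell,32)+e_\ell\mathcal{R}'_e(0)+\mathcal{R}'_b(0,e_\ell)$ in the selfish case---before simplifying. Your bookkeeping remarks about which block hosts the $e_{\ell-1}$ endorsements and why only $e_\ell$ endorsements appear in the slot-$(\ell{+}1)$ block are precisely the points the paper relies on.
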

\begin{proof}
See Appendix \ref{appd}.
\end{proof}

We now prove that length-2 selfish endorsing under this reward system is never profitable. 
\begin{lemma}\label{lemma:len2noprofit}
Under the modified PoS protocol, for any tuples $(e_{\ell-1}, e_\ell, p_\ell)$, we have  $\mathcal{R}_{s}'(e_{\ell-1}, e_\ell, p_\ell) \leq \mathcal{R}_{h}'(e_{\ell-1}, e_\ell)$.
\end{lemma}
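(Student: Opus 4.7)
The plan is to substitute the closed forms for $\mathcal{R}'_h$ and $\mathcal{R}'_s$ from Lemma \ref{lemmanewrew} into the difference $\mathcal{R}'_h - \mathcal{R}'_s$, then show this difference is non-negative by decomposing it into a sum of two manifestly non-negative terms. Concretely, I would compute
\begin{align}
\mathcal{R}'_h - \mathcal{R}'_s &= 1.25(e_{\ell-1} + e_\ell + 32) - 2.5\!\left(\tfrac{e_{\ell-1}}{p_\ell+1} + e_\ell\right) \notag
\end{align}
and then regroup terms to isolate an $e_{\ell-1}$-term and an $e_\ell$-term. The $e_{\ell-1}$ piece will collapse to a factor of the form $1.25\, e_{\ell-1}\bigl(1 - \tfrac{2}{p_\ell+1}\bigr)$, and the remaining piece should simplify to $40 - 1.25\, e_\ell$.

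Next I would argue each piece is non-negative under the constraints of the attack model. The key observation is that in any length-2 selfish endorsing attack the attacker is by definition stealing from a higher-priority baker, so $p_\ell \geq 1$; this immediately gives $1 - \tfrac{2}{p_\ell+1} \geq 0$ and hence the $e_{\ell-1}$ contribution is non-negative (using $e_{\ell-1}\geq 0$). For the second piece, since there are only 32 endorsement slots per block we have $e_\ell \leq 32$, so $40 - 1.25 e_\ell \geq 40 - 40 = 0$. Summing two non-negative quantities yields $\mathcal{R}'_h - \mathcal{R}'_s \geq 0$, which is exactly the claim.

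The only subtlety, and the thing I would flag explicitly, is the justification for $p_\ell \geq 1$. This is not an a priori bound on the random variable $p_\ell$ but rather a structural feature of what counts as an attack: if $\mathcal{X}$ already held the priority-0 baking slot at height $\ell$ there would be nothing to steal, and the honest and selfish strategies would coincide at that block. Once this is spelled out the rest of the argument is routine algebra, so I do not anticipate any hard step beyond being careful with the case $p_\ell = 0$ (where the two strategies are identical and the inequality holds trivially as an equality on the relevant components).
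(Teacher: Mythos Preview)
Your proposal is correct and is essentially the same argument as the paper's: both reduce the inequality to the two facts $p_\ell \geq 1$ (so the $e_{\ell-1}$-term, which factors as $1.25\,e_{\ell-1}\cdot\frac{p_\ell-1}{p_\ell+1}$, is non-negative) and $e_\ell \leq 32$ (so $40 - 1.25\,e_\ell \geq 0$). The only cosmetic difference is that the paper phrases it as a proof by contradiction rather than a direct non-negativity decomposition.
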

\begin{proof}
Assume for contradiction that $\mathcal{R}_{s}' > \mathcal{R}_{h}'$, and so
\begin{align}
    2.5 \left( \frac{e_{\ell-1}}{p_\ell+1} + e_\ell\right) &> 1.25  (e_{\ell-1} + e_\ell + 32).
\end{align}

Collecting terms and simplifying, this is equivalent to
\begin{align}
\label{eq:simple}
e_{\ell-1} \left( \frac{1 - p_\ell}{p_\ell+1}\right) + e_\ell &> 32.
\end{align}

Because $p_\ell \geq 1$ , we have
\begin{align}
    \frac{1 - p_\ell}{p_\ell+1} \leq 0.
\end{align}

This along with the fact that $e_{\ell-1} \geq 0$ implies
\begin{align*}
    e_{\ell-1} \left( \frac{1 - p_\ell}{p_\ell+1}\right) \leq 0.
\end{align*}

Substituting this into~\eqref{eq:simple}  implies that we would need  $e_\ell > 32$ for profitability, and a contradiction, since $e_\ell \in \{0, 1, ..., 32\}$. 
\end{proof}

\subsection{Discussion}

This is a useful result in the context of defending against length-1 and length-2 selfish endorsing. And yet, this modification actually weakens the PoS system against longer forking attacks. This is because under the modified delay schedule, the time penalty for each drop in priority is so large that an attacker with a couple first priorities in the coming slots could withhold their blocks, and during the intermediate time work on creating a longer chain, while the honest nodes have to wait 193 extra seconds each time the attack has the $0^{th}$ priority. The last section of the analysis of Emmy$^+$ provided by Nomadic Labs discusses these kinds of trade-offs, and how the exact constants were selected for Emmy$^+$ \cite{analysisemmyplus}.

\section{Conclusion}
\label{sec:6}

We have demonstrated that live PoS systems can be formally analyzed for their incentive vulnerabilities. Our analysis of Tezos also serves as a real-world example of the {\em predictable selfish mining attack} that was theorized by Brown-Cohen et al.~\cite{brown2019formal}.
At the same time, we recognize that, as of November 2019, length-2 selfish endorsing attacks do not seem to be a major threat to the Tezos network because the profit from executing such an attack is so low. However, we have also suggested a simple, heuristic fix that would reduce the probability and value of many of the attacks by an order of magnitude. We have further presented a modified delay schedule and reward function that is provably secure against length-1 and length-2 selfish endorsing (though we acknowledge that in practice other attack vectors must also be considered when implementing these protocols).

By no means have we provided a complete security analysis of the Tezos protocol. Future work combining selfish endorsing with other deviating strategies, as was done in the PoW literature \cite{nayak2016stubborn, kwon2017selfish}, is critical to assessing the total security of the system. The formal framework introduced in our work  can be used to check other PoS protocols for potential vulnerabilities to selfish behavior by parameterizing a model of time and reward with respect to a specific protocol. In fact, a new reward scheme was recently proposed by the Nomadic Labs developers under the  name  {\em Carthage}~\cite{cathagedoc}, and this framework can be applied directly to the new delay and reward functions. Nomadic Labs provide an analysis of the effect of this reward scheme on a few different off-protocol strategies, including the selfish endorsing attack \cite{cathageanal}.

We hope that our work serves as a starting point for analyses and a more formal treatment of the security properties of PoS systems. For immediate  future directions, the first is to develop a theory of profitable selfish-endorsing attacks beyond length-2. This is non-trivial because the technique described in Appendix \ref{appb} scales exponentially due to the Cartesian product of each of the state variables. The second direction is to consider a more general forking attack that on its own would earn a smaller staking reward relative to honest behavior, but allow for an attacker to include a double-spend transaction.\footnote{Both of these questions have been considered by Nomadic Labs~\cite{analysisemmyplus}, but the precise models, derivations, and probabilistic machinery used are not made explicit.} Additionally, more analysis needs to be done in regard to modelling the presence of multiple selfish delegates, which also remains an active area of research in the Bitcoin PoW literature~\cite{multi1, multi2}. 

\section*{Acknowledgments} The authors would like to thank Eugen Zalinescu and Arthur Breitman for helpful discussions and for proposing new reward schemes to analyze. Additional thanks are due to Dan Robinson and the two anonymous reviewers for helpful comments as we prepared the manuscript for the 2020 Cryptoeconomic Systems Conference.  This work is supported in part by two generous gifts to the Center for Research on Computation and Society at Harvard University, both to support research on applied cryptography and society. Daniel J. Moroz was also supported in part by the Ethereum Foundation.

\printbibliography

\appendix
\section{Proof of Lemma \ref{lemmareduced}}\label{appa}
\begin{lemma}
$ \mathcal{R}_{2}(p_\ell, e_{\ell-1}, e_\ell) = 16 \left(\frac{1}{p_\ell +1} + \frac{e_\ell}{160} - \frac{1}{5}   \right) + 2e_{\ell} \left( \frac{1}{ p_\ell+1} - 1\right)$.
\end{lemma}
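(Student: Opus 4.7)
The plan is to mimic the derivation of Lemma~\ref{lem42}, but swap which priority enters the endorsement reward. Under the honest branch, the attacker's $e_{\ell-1}$ endorsements sign the slot $\ell{-}1$ block, which has priority $0$, and the $e_\ell$ endorsements sign the slot $\ell$ block, which is also priority $0$. Since only priority-$0$ blocks get endorsed in the honest case, the substitution $p_e=0$ gives the same per-endorsement reward $\mathcal{R}_e(0)=2$ as before, so the honest total is unchanged: $\mathcal{R}_h(e_{\ell-1},e_\ell)=2e_{\ell-1}+2e_\ell+16$.

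The work is in the selfish total. The difference from Lemma~\ref{lem42} is that now the two endorsement groups essentially exchange roles. In the attack, the $e_{\ell-1}$ endorsements still sign the honest slot $\ell{-}1$ block (priority $0$), so under the new rule they earn the full $\mathcal{R}_e(0)=2$ each instead of the discounted $\mathcal{R}_e(p_\ell)=2/(p_\ell+1)$. By contrast, the $e_\ell$ endorsements sign the attacker's slot $\ell$ block (priority $p_\ell$), so they now earn $\mathcal{R}_e(p_\ell)=2/(p_\ell+1)$ each rather than $2$. The two block rewards $\mathcal{R}_b(p_\ell,32)$ and $\mathcal{R}_b(0,e_\ell)$ are unaffected by the change. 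Putting this together gives
\begin{align*}
\mathcal{R}'_s(e_{\ell-1},e_\ell,p_\ell) &= 2e_{\ell-1} + \tfrac{2e_\ell}{p_\ell+1} + \tfrac{16}{p_\ell+1} \\
&\quad + 16\!\left(\tfrac{4}{5}+\tfrac{e_\ell}{160}\right).
\end{align*}

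I would then subtract $\mathcal{R}_h$ from $\mathcal{R}'_s$. The $2e_{\ell-1}$ terms cancel; the surviving $e_\ell$ terms collect into $2e_\ell\!\left(\tfrac{1}{p_\ell+1}-1\right)$; and the remaining constants and $16/(p_\ell+1)$ combine into $16\!\left(\tfrac{1}{p_\ell+1}+\tfrac{e_\ell}{160}-\tfrac{1}{5}\right)$, matching the claimed expression.

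There is no real obstacle beyond bookkeeping: the only thing to be careful about is tracking which block's priority is plugged into $\mathcal{R}_e$ for each of the four endorsement groups (two in the honest case, two in the selfish case). The symmetry with Lemma~\ref{lem42} explains the form of the result — replacing $p_i$ by $p_e$ in the endorsement reward effectively swaps the factor $\left(\tfrac{1}{p_\ell+1}-1\right)$ from multiplying $e_{\ell-1}$ to multiplying $e_\ell$, while leaving the block-reward contribution $16\!\left(\tfrac{1}{p_\ell+1}+\tfrac{e_\ell}{160}-\tfrac{1}{5}\right)$ intact.
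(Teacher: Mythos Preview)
Your proposal is correct and follows essentially the same derivation as the paper: compute $\mathcal{R}_h$ unchanged, recompute $\mathcal{R}_s$ with the endorsement rewards now keyed to the endorsed block's priority (so the $e_{\ell-1}$ term gets $\mathcal{R}_e(0)$ and the $e_\ell$ term gets $\mathcal{R}_e(p_\ell)$, reversing their roles relative to Lemma~\ref{lem42}), and subtract. Your observation that the net effect is to swap which of $e_{\ell-1}$, $e_\ell$ carries the factor $\left(\tfrac{1}{p_\ell+1}-1\right)$ while leaving the block-reward contribution intact is exactly the structural point the paper's calculation exhibits.
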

\begin{proof}
First we focus on the reward for behaving honestly. We must take into account the endorsement rewards and the block rewards for the next two blocks. We define the total reward for honest behavior over the next two blocks as
\begin{align}
    \mathcal{R}_{h}(e_{\ell-1}, e_\ell)  &= \underbrace{e_{\ell-1} \cdot \mathcal{R}_{e}(0)}_{\texttt{slot } \ell -1 \texttt{ rewards}} + \underbrace{e_\ell \cdot \mathcal{R}_{e}(0) }_{\texttt{slot } \ell \texttt{ rewards}}+ \underbrace{\mathcal{R}_{b} (0, 32)}_{\texttt{slot } \ell+1 \texttt{ rewards}} \notag \\
    &= 2e_{\ell-1} + 2 e_\ell + 16 = 2  (e_{\ell-1} + e_\ell) + 16.
\end{align} 

Similarly we calculate the rewards while following the selfish endorsing policy  as 

\begin{align}
    \mathcal{R}_{s}(&e_{\ell-1}, e_\ell, p_\ell) ={} \underbrace{e_{\ell-1} \cdot \mathcal{R}_{e}(0)}_{\texttt{slot } \ell-1 \texttt{ rewards}}+\underbrace{ \mathcal{R}_{b}(p_\ell, 32)+e_\ell \cdot \mathcal{R}_{e}(p_\ell) }_{\texttt{slot } \ell \texttt{ rewards}} 
   \notag  \\ & + \underbrace{ \mathcal{R}_{b}(0, e_\ell)}_{\texttt{slot } \ell +1 \texttt{ rewards}} \notag \\
    ={}& 2  e_{\ell-1} + \frac{16}{p_\ell+1} + \frac{2}{p_\ell+1} e_\ell +16 \left(\frac{4}{5} + \frac{1}{5} \cdot \frac{e_\ell}{32}\right) \notag \\
    ={}& 2  \left( \frac{e_{\ell}}{p_\ell +1} + e_{\ell-1} \right) + 16  \left(\frac{1}{p_\ell+ 1} + \frac{4}{5} + \frac{e_\ell}{160} \right).
\end{align}

Now we can solve for $\mathcal{R}_{2}$:
\begin{align}
    \mathcal{R}_{2} &(e_{\ell-1}, e_\ell, p_\ell) ={} \mathcal{R}_{s}(e_{\ell-1}, e_\ell, p_\ell) - \mathcal{R}_{h}(e_{\ell-1}, e_\ell) \ \notag \\ 
    ={}& 2 \left( \frac{e_{\ell}}{p_\ell +1} + e_{\ell-1} - e_{\ell-1} - e_\ell \right) + 16  \left(\frac{1}{p_\ell+ 1} + \frac{4}{5} + \frac{e_\ell}{160} - 1\right) \notag \\
    ={}& 2e_{\ell}  \left( \frac{1}{p_\ell + 1} - 1\right)  + 16  \left(\frac{1}{p_\ell+1} + \frac{e_\ell}{160} - \frac{1}{5} \right).
\end{align}
\end{proof}

\section{Procedure 1}\label{appb}
This procedure is used to calculate the total probability and value of a length-2 selfish endorsing attack. Note that $\times$ is the Cartesian Product of  lists.
\begin{algorithm}[H]
\caption{Find attack probability $\&$ value}
\begin{algorithmic}
\REQUIRE $\alpha$
\STATE \texttt{E $\leftarrow$ [0, 1, ..., 32]} 
\STATE \texttt{P $\leftarrow$ [1, 2, ..., 20]} 
\STATE \texttt{N $\leftarrow$ [1, 2, ..., 20]} 
\STATE \texttt{totalProb} $\leftarrow 0$
\STATE \texttt{totalValue} $\leftarrow 0$
\FOR{$(e_{\ell-1}, e_\ell, p_\ell, n_{\ell+1}) \in \texttt{E} \times \texttt{E} \times \texttt{P} \times \texttt{N}$}
\IF{$\mathcal{D}_2 < 0\text{ and } \mathcal{R}_{2} > 0 $}
\STATE \texttt{currentProb} $\leftarrow \mathrm{Pr}[\,(e_{\ell-1}, e_\ell, p_\ell, n_{\ell+1}) \: | \: \alpha \,]$
\STATE \texttt{totalProb += currentProb}
\STATE \texttt{totalValue += currentProb $*\:\: \mathcal{R}_{2}$}
\ENDIF
\ENDFOR
\RETURN{\texttt{(totalProb, totalValue)}}
\end{algorithmic}
\end{algorithm}

\section{Proof of Lemma \ref{lemmasingle}} \label{appc}

\begin{lemma}
For any tuple $(e_{\ell-1}, p_\ell)$, the length-1 selfish endorsing attack is not profitable under Emmy$^+$. 
\end{lemma}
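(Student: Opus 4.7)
The plan is to bound the selfish gain $\mathcal{R}_{2}$ of a length-$1$ attack over slot $\ell$ alone, and to show that the feasibility constraint already forces the profitability expression to be non-positive; a separate (trivial) argument then handles infeasibility.

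First I would mirror the setup of Lemma~\ref{lem42} but for a single-slot horizon. Played honestly, $\mathcal{X}$'s slot-$\ell$ reward is just the endorsement income from slot $\ell-1$, included in the honest priority-$0$ block: $\mathcal{R}_{h} = e_{\ell-1}\mathcal{R}_{e}(0) = 2e_{\ell-1}$. If instead $\mathcal{X}$'s priority-$p_\ell$ block displaces the honest block, they collect $\mathcal{R}_{b}(p_\ell,32) + e_{\ell-1}\mathcal{R}_{e}(p_\ell) = \tfrac{16+2e_{\ell-1}}{p_\ell+1}$, so
\[
\mathcal{R}_{2} \;=\; \frac{16 + 2e_{\ell-1}}{p_\ell+1} - 2e_{\ell-1} \;=\; \frac{2\bigl(8 - e_{\ell-1}p_\ell\bigr)}{p_\ell+1}.
\]
Thus $\mathcal{R}_{2} \le 0$ exactly when $e_{\ell-1}p_\ell \ge 8$, which is not universally true on the naked $(e_{\ell-1},p_\ell)$ grid and is where the argument becomes interesting.

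The crux is to observe that feasibility of the attack already forces $e_{\ell-1}p_\ell \ge 8$. Under the longest-chain rule with first-seen tie-breaking, the selfish slot-$\ell$ block must appear strictly before the honest priority-$0$ block in order to be extended; $\mathcal{X}$'s only lever is to withhold their $e_{\ell-1}$ slot-$(\ell-1)$ endorsements, leaving the honest baker with only $32-e_{\ell-1}$ of them while the selfish block still carries all $32$. Using~\eqref{delay}, feasibility demands
\[
\mathcal{D}(p_\ell,32) \;=\; 60 + 40p_\ell \;<\; 60 + 8\max(e_{\ell-1}-8,0) \;=\; \mathcal{D}(0,32-e_{\ell-1}),
\]
which reduces to the integer bound $e_{\ell-1} \ge 9 + 5p_\ell$ (and is outright impossible for $p_\ell \ge 5$, since it would require $e_{\ell-1} > 32$). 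For every such feasible tuple, $e_{\ell-1}p_\ell \ge (9+5p_\ell)\,p_\ell \ge 14 > 8$, and hence $\mathcal{R}_{2} < 0$.

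It remains to dispatch the infeasible parameters: there the selfish block is orphaned, $\mathcal{X}$ earns no baking reward, and a rational $\mathcal{X}$ still broadcasts their endorsements so that they are included on the canonical honest chain, giving $\mathcal{R}_{s}=\mathcal{R}_{h}$ and $\mathcal{R}_{2}=0$. The only real obstacle is the arithmetic verification that the feasibility lower bound $e_{\ell-1} \ge 9+5p_\ell$ dominates the profitability threshold $e_{\ell-1}p_\ell \ge 8$ uniformly on the feasible grid $p_\ell \in \{1,2,3,4\}$; everything else is bookkeeping.
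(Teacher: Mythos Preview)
Your proof is correct, and in fact it is cleaner and slightly stronger than the paper's own argument. The two proofs share the same overall architecture---compute the length-$1$ selfish and honest rewards, identify the feasibility constraint coming from~\eqref{delay}, and check that every feasible tuple is unprofitable---but they differ in two ways.

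First, the modeling choice: you let the selfish slot-$\ell$ block carry all $32$ endorsements of block $\ell-1$ (since the honest endorsers broadcast publicly), which is exactly the convention used in the paper's length-$2$ analysis (Figure~\ref{fig:2}). The paper's own proof in Appendix~\ref{appc} instead has the selfish block include only $e_{\ell-1}$ endorsements, which pushes the feasibility threshold up to $e_{\ell-1}\ge 19$ (at $p_\ell=1$) rather than your $e_{\ell-1}\ge 14$. Your model is more favorable to the attacker both in timing and in reward, so your proof covers a strictly larger feasibility region and is the harder case.

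Second, the argument style: the paper fixes the ``best case'' $p_\ell=1$, computes the boundary $e_{\ell-1}=19$ numerically, argues monotonicity in $e_{\ell-1}$ verbally, and then plugs in to get $\mathcal{R}_{s,1}(19)=26.35<38=\mathcal{R}_{h,1}(19)$. You instead derive the closed form $\mathcal{R}_2=\tfrac{2(8-e_{\ell-1}p_\ell)}{p_\ell+1}$ and the feasibility bound $e_{\ell-1}\ge 9+5p_\ell$, and show algebraically that the latter forces $e_{\ell-1}p_\ell\ge 14>8$ uniformly in $p_\ell\in\{1,2,3,4\}$. This handles every $p_\ell$ at once without a separate monotonicity appeal, and the infeasible case is disposed of exactly as the paper does (implicitly) by noting the selfish block is orphaned.
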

\begin{proof}
First we consider the time it takes the honest network to produce a single block at height $\ell$. Denote this value $\mathcal{D}_{h,1}$.
We have,
\begin{align}
    \mathcal{D}_{h,1}(e_{\ell-1}) &= 60 + 8 \max(24-(32-e_{\ell-1}), 0)\notag \\
    &= 60 + 8  \max(e_{\ell-1}-8, 0).
\end{align}

Now we find the time it takes for the selfish delegate to create a block, $\mathcal{D}_{s,1}$.
We have,
\begin{align}
    \mathcal{D}_{s,1}(p_{\ell}, e_{\ell-1}) &= 60 + 40p_{\ell} + \max(24-e_{\ell-1}, 0).
\end{align}

The best case scenario for the attacker is when $p_\ell=1$, and for this we have
\begin{align}
    \mathcal{D}_{s,1}(e_{\ell-1}) &= 100 + \max(24-e_{\ell-1}, 0).
\end{align}

Now we find the number of endorsements required for the selfish delegate to produce a valid block faster than the honest network to be $e_{\ell-1}=19$. We verify this with the following calculations:
\begin{align}
    \mathcal{D}_{s,1}(19) &= 100 + 8 (5)= 140 \\ 
    \mathcal{D}_{h,1}(19) &= 60 + 8 (11)= 148
\end{align}

Now  we know that $e_{\ell-1}= 19$ is the best case scenario for the attack being feasible. If $e_{\ell-1} < 19$, then the selfish network will not be able to create a block fast enough, and if $e_{\ell-1} > 19$, then the reward will be lower because additional endorsements will end up on the lower priority stolen block. 
Additionally, we know that the reward for playing honestly is $\mathcal{R}_{h,1}(e_{\ell-1}) = 2e_{\ell-1}$ because we will get all the endorsement rewards, and the reward for playing selfishly is,
\begin{align}
    \mathcal{R}_{s,1}(e_{\ell-1}) &= \underbrace{8  \left(\frac{4}{5} + \frac{1}{5} \cdot \frac{e_{\ell-1}}{32}\right)}_{\mathcal{R}_{b}(1,e_{\ell-1})} + e_{\ell-1} \cdot \underbrace{1}_{\mathcal{R}_{e}(1)}
\end{align}

Plugging in $e_{\ell-1}=19$, we have
\begin{align}
    \mathcal{R}_{h,1}(19) &= 19 ( 2)  = 38 \\
    \mathcal{R}_{s,1}(19) &= 8 \left(\frac{4}{5} + \frac{1}{5} \cdot \frac{19}{32}\right) + 19 = 26.35
\end{align}

We see that even for the smallest value of $e_{\ell-1}$ that makes the attack feasible, the profit gained from creating a new block does not outweigh the profit lost as a result of the endorsements ending up on a worse block. So because $\mathcal{R}_{h,1} > \mathcal{R}_{s,1}$,  a single block selfish endorsing attack is not profitable under Emmy$^+$. 
\end{proof}

\section{Proof of Lemma \ref{lemmanewrew}} \label{appd}
\begin{lemma}
Under the modified PoS protocol, the total reward for a rational delegate $\mathcal{X}$ playing honestly over the next two blocks with the tuple $(e_{\ell-1}, e_\ell, p_\ell)$ is
\begin{align*}
     \mathcal{R}'_{h}(e_{\ell-1}, e_\ell)  &= 1.25   (e_{\ell-1} + e_\ell + 32),
\end{align*}
and the total reward for $\mathcal{X}$ to play selfishly over the next two blocks is
\begin{align*}
    \mathcal{R}'_{s}(e_{\ell-1}, e_\ell, p_\ell) &= 2.5 \left( \frac{e_{\ell-1}}{p_\ell+1} + e_\ell\right).
\end{align*}
\end{lemma}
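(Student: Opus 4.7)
The plan is to derive both identities by a direct itemization of $\mathcal{X}$'s reward streams in each two-block scenario, following exactly the bookkeeping scheme used in Lemma~\ref{lem42} but substituting the modified reward functions $\mathcal{R}'_b(p,e) = \frac{5e}{4(p+1)}$ and $\mathcal{R}'_e(p_i) = \frac{5}{4(p_i+1)}$ from \eqref{newrewardblock} and \eqref{newrewardendorse}.

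First I would handle the honest branch. Here $\mathcal{X}$ collects three contributions: (i) an endorsement payment of $\mathcal{R}'_e(0)$ for each of the $e_{\ell-1}$ endorsements it holds on slot $\ell-1$, because these land in the honest priority-$0$ block at slot $\ell$; (ii) an endorsement payment of $\mathcal{R}'_e(0)$ for each of the $e_\ell$ endorsements it holds on slot $\ell$, because these land in $\mathcal{X}$'s own priority-$0$ block at slot $\ell+1$; and (iii) the baking reward $\mathcal{R}'_b(0,32)$ for producing that slot-$(\ell+1)$ block with a full $32$-endorsement quorum. Substituting $\mathcal{R}'_e(0) = 5/4$ and $\mathcal{R}'_b(0,32) = 40$ and pulling out the common factor $5/4 = 1.25$ will give the claimed $1.25(e_{\ell-1} + e_\ell + 32)$.

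Next I would handle the selfish branch, which adds a baking reward at slot $\ell$ and shifts some endorsement-priority exponents. The four streams are: (i) the $e_{\ell-1}$ slot-$(\ell-1)$ endorsements now sit in $\mathcal{X}$'s priority-$p_\ell$ block at slot $\ell$, each paying $\mathcal{R}'_e(p_\ell)$; (ii) the baking reward $\mathcal{R}'_b(p_\ell, 32)$ for that slot-$\ell$ block, which still carries the full quorum of $32$ because the slot-$(\ell-1)$ endorsers act before the fork is born (cf.\ Figure~\ref{fig:2}); (iii) the $e_\ell$ slot-$\ell$ endorsements in $\mathcal{X}$'s priority-$0$ block at slot $\ell+1$, each paying $\mathcal{R}'_e(0)$; and (iv) the baking reward $\mathcal{R}'_b(0, e_\ell)$ for that slot-$(\ell+1)$ block, where the second argument is $e_\ell$ rather than $32$ because on the private fork only $\mathcal{X}$ itself endorsed the stolen slot-$\ell$ block. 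Substituting the new reward formulas, splitting the result into the terms carrying a $(p_\ell+1)$ denominator and the terms that do not, and simplifying will recover the closed form $2.5\bigl(e_{\ell-1}/(p_\ell+1) + e_\ell\bigr)$.

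The whole derivation is routine algebra, essentially a one-to-one port of Lemma~\ref{lem42}, and so it can be deferred to the appendix. The one place where care is required is the slot-$(\ell+1)$ block, where the second argument to $\mathcal{R}'_b$ is $32$ under honest play but only $e_\ell$ under selfish play; getting this asymmetry right is what collapses the slot-$(\ell+1)$ contributions into the clean $2.5 e_\ell$ term and is the only substantive bookkeeping obstacle in the proof.
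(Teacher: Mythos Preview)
Your proposal is correct and mirrors the paper's own proof essentially line for line: both itemize the same three honest streams and four selfish streams exactly as in Lemma~\ref{lem42}, substitute $\mathcal{R}'_b$ and $\mathcal{R}'_e$, and simplify. Your emphasis on the $32$ versus $e_\ell$ asymmetry in the slot-$(\ell+1)$ baking argument is precisely the bookkeeping point the paper relies on.
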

\begin{proof}
This derivation is  similar to the proof of Lemma~\ref{lem42}, but substituting
for the new reward functions. We have:
\begin{align}
    \mathcal{R}'_{h}(&e_{\ell-1}, e_\ell)  ={} \underbrace{e_{\ell-1} \cdot \mathcal{R}'_{e}(0)}_{\texttt{slot } \ell \texttt{ rewards}} + \underbrace{e_\ell \cdot \mathcal{R}'_{e}(0) + \mathcal{R}'_{b} (0, 32)}_{\texttt{slot } \ell +1\texttt{rewards}} \notag \\
    ={}& 1.25 e_{\ell-1} +  1.25 e_\ell + 1.25(32) = 1.25   (e_{\ell-1} + e_\ell + 32). 
    \end{align}
    \begin{align}
    \mathcal{R}'_{s}(&e_{\ell-1}, e_\ell, p_\ell) = \underbrace{e_{\ell-1} \cdot \mathcal{R}'_{e}(p_\ell) + \mathcal{R}'_{b}(p_\ell, 32)}_{\texttt{slot } \ell \texttt{ rewards}} \notag
    \\ \quad{}&\quad + \underbrace{e_\ell \cdot \mathcal{R}'_{e}(0) + \mathcal{R}'_{b}(0, e_\ell)}_{\texttt{slot } \ell +1 \texttt{ rewards}} \notag \\
    ={}& \frac{1.25}{p_\ell +1}  e_{\ell-1} + \frac{1.25 e_{\ell-1}}{p_\ell+1} + 1.25 e_\ell + 1.25e_\ell \notag \\
    ={}& \frac{2.5e_{\ell-1}}{p_\ell+1} + 2.5e_\ell  \notag \\ ={}& 2.5 \left( \frac{e_{\ell-1}}{p_\ell+1} + e_\ell\right).
\end{align}
\end{proof}

\end{document}